\newtheorem{theorem}{Theorem}
\newtheorem{lemma}{Lemma}
\theoremstyle{definition}
\newtheorem{definition}{Definition}
\theoremstyle{remark}
\newtheorem{remark}{Remark}
\theoremstyle{definition}
\newtheorem{assumption}{Assumption}
\theoremstyle{definition}
\newtheorem{problem}{Problem}
\newcommand{\C}{\mathcal{C}}
\definecolor{darkblue}{RGB}{0,0,102}
\definecolor{lightblue}{RGB}{77,77,148}
\definecolor{gold}{RGB}{234, 170, 0}
\definecolor{metallic_gold}{RGB}{139, 111, 78}
\newcommand{\norm}[1]{\left\Vert #1 \right\Vert}
\newcommand{\derp}[2]{\frac{\partial #1 }{\partial #2 }}
\newcommand{\grad}{\nabla}
\DeclareMathOperator{\diag}{diag}
\DeclareMathOperator{\rank}{rank}
\DeclareMathOperator{\Ima}{Im}
\DeclareMathOperator*{\argmin}{arg\,min}
\def\BibTeX{{\rm B\kern-.05em{\sc i\kern-.025em b}\kern-.08em
    T\kern-.1667em\lower.7ex\hbox{E}\kern-.125emX}}
\begin{document}

\title{ \bf
Robust Control Barrier Functions using Uncertainty Estimation \\ with Application to Mobile Robots
}

\author{Ersin Da\c{s}, \IEEEmembership{Member, IEEE}, Joel W. Burdick, \IEEEmembership{Member, IEEE}
\thanks{*This work was supported by DARPA under the LINC program.}
\thanks{The authors are with the Dept. of Mechanical and Civil Engineering, Caltech, Pasadena, CA 91125. ${\tt\small \{ersindas, jburdick \}@caltech.edu}$ } }

\maketitle
\thispagestyle{empty}     
\pagestyle{empty}

\begin{abstract}
This paper proposes a safety-critical control design approach for nonlinear control affine systems in the presence of matched and unmatched uncertainties. Our constructive framework couples control barrier function (CBF) theory with a new uncertainty estimator to ensure robust safety. We use the estimated uncertainty, along with a derived upper bound on the estimation error, for synthesizing CBFs and safety-critical controllers via a quadratic program-based feedback control law that rigorously ensures robust safety while improving disturbance rejection performance. We extend the method to higher-order CBFs (HOCBFs) to achieve safety under unmatched uncertainty, which may cause relative degree differences with respect to control input and disturbances. We assume the relative degree difference is at most one, resulting in a second-order cone constraint. We demonstrate the proposed robust HOCBF method through a simulation of an uncertain elastic actuator control problem and experimentally validate the efficacy of our robust CBF framework on a tracked robot with slope-induced matched and unmatched perturbations. 
\end{abstract}

\begin{IEEEkeywords}
Control barrier functions, constrained control, robotics, robust control, uncertainty estimation
\end{IEEEkeywords}

\section{Introduction} 
\label{sec:intro}
New applications of modern autonomous control systems in complex environments increase the importance of system safety. For example, mobile robots, autonomous vehicles, and robot manipulators operating in the presence of humans must be controlled safely. However, safety-critical controller design is a challenging problem in the deployment of real-world autonomous and cyber-physical systems \cite{guiochet2017}. These problems motivate the need for a control design that provides theoretical safety assurances.

Control system safety is often based on set invariance, which ensures that system states always remain within a safe set. Methods to synthesize controllers that guarantee safe set forward invariance include predictive control-based techniques \cite{hewing2020learning}, reachability-based methods \cite{bansal2017}, and control barrier functions (CBFs) \cite{ames2017control}. These frameworks also yield safety filters that minimally alter control inputs when safety risks arise. For nonlinear control affine systems, CBF-based schemes can enforce safety conditions via quadratic programs. CBF effectiveness has been shown in various applications \cite{wabersich2023data}.

Conventional CBF formulations may be sensitive to inevitable model uncertainties or external disturbances. This sensitivity triggers the need for assessing safety under unmodelled system dynamics \cite{xu2015robustness}. Several works have proposed robustifying terms in the CBF condition to address the robustness against disturbances and model uncertainties \cite{alan2023parameterized}. Specifically, \cite{jankovic2018robust} and \cite{kolathaya2018CLF} use a constant bound to represent the unmodeled dynamics in CBF constraints. This constant bound may be difficult to tune in practice, and usually results in undesired conservativeness and a reduction in closed-loop control performance. Tunable Input-to-State Safe CBF (TISSf-CBF) \cite{alan2021safe} has been proposed to reduce the conservatism of these methods.

Robust safety-critical control systems must balance safety and performance. Maintaining safety is crucial, but avoiding unnecessary conservativeness that negatively impacts performance is also essential \cite{bikas2021prescribed, bechlioulis2008robust}. Several safe, robust adaptive control design methods using CBFs \cite{lopez2023unmatched, black2021fixed} have been proposed to increase robust controller performance by using online estimation under the assumption of parametric system uncertainties. Recently, machine learning methods have been combined with adaptive CBFs \cite{cohen2023adaptive}.

Recent works have connected disturbance observers from robust control with CBFs to ensure robust safety with the active estimation of an external disturbance \cite{zhao2020adaptive, dacs2022robust, alan2022disturbance, wang2022disturbance, rise}. These prior studies have been limited to state-dependent external disturbances. However, uncertainties may depend on both system states and controls. Nor did these studies use active disturbance/uncertainty compensators to design observers that improve control performance while guaranteeing safety.  

It may be difficult to define a CBF when its time derivative does not depend on the control input. High-order CBFs (HOCBFs) \cite{xiao2019control, tan2021high} address this limitation. However, a CBF may have different relative degrees with respect to the input and unmatched uncertainties. If the input relative degree (IRD) is greater than the disturbance relative degree (DRD), the time derivatives of the uncertainties, which are naturally unknown, appear in the low-order CBF derivatives that are used in HOCBF methods with dynamical extensions \cite{takano2020robust}. The extension of HOCBFs to address robustness against unmatched uncertainties has not yet been considered.

Motivated by the aforementioned limitations, we propose more general uncertainty/disturbance estimation-based robust CBF and HOCBF frameworks for control-affine nonlinear systems subject to external disturbances and time-varying, state and input-dependent matched and unmatched uncertainties. Our main contributions are: 
\begin{enumerate}
  \item We connect existing CBF conditions with uniformly ultimately bounded uncertainty estimators to develop a robust safety-critical control scheme. We also compose the estimation and control law with a CBF-QP output to improve robust control performance, while guaranteeing robust safety, by compensating the matched uncertainty. We extend our method to HOCBFs, achieving safety under unmatched uncertainty, which may pose relative degree differences between the control inputs and disturbances. When ${\text{DRD} \!-\! \text{IRD} \!=\!\! 1}$, our method yields a second-order cone program.
  \item We introduce a new uncertainty/disturbance estimator that extends the observer studied in \cite{stotsky2002, dacs2022robust, alan2022disturbance} to observe unmodelled dynamics in nonlinear systems with multiple disturbance inputs. An upper bound for the estimation error is developed under boundedness assumptions on the uncertainty.
  \item We experimentally demonstrate our CBF method on a tracked mobile robot, where both unmatched and matched uncertainties occur. We showcase our robust HOCBF results via a simulation of an elastic actuator control problem in the presence of unmatched and matched uncertainties, which cause relative degree differences with respect to input and disturbance. 
\end{enumerate}

We organize this paper as follows. Section~\ref{sec:pre} provides preliminaries. Section~\ref{sec:main} proposes and analyzes an uncertainty estimator and introduces estimator-based robust CBF schemes. Section~\ref{ex:sim_hard} presents simulations and experiments, while Section~\ref{sec:conc} concludes the paper.

\section{Preliminaries}
\label{sec:pre}
\textbf{\textit{Notation:}} ${\mathbb{N}, \mathbb{R}, \mathbb{R}^+, \mathbb{R}^+_0}$ represent the set of natural, real, positive real, and non-negative real numbers, respectively. The Euclidean (the Frobenius norm) norm of a matrix is denoted by $\|\!\cdot\!\|$. A zero vector is denoted by ${\bf 0}$. ${\partial \C}$ denotes the boundary of a set ${\C}$. 

We consider nonlinear control affine systems of the form:
\begin{equation}
\label{system}
    \dot{x}  = f(x) + g(x) u,
\end{equation}
where ${x \!\in\! X \!\subset\! \mathbb{R}^n}$ is the state, ${u \!\in\! U \!\subset\! \mathbb{R}^m}$ is the control input, and ${f\!:\! X \!\to\! \mathbb{R}^n}$, ${g\!:\! X \!\to\! \mathbb{R}^{n \times m} }$ are locally Lipschitz continuous functions on $X$. We call \eqref{system} the \textit{actual (or uncertain) model}. A locally Lipschitz continuous controller ${u \!=\! \mathbf{k}(x)}$, with ${\mathbf{k}\!:\! X \!\to\! U}$, yields a locally Lipschitz continuous \textit{closed-loop} control system, ${f_{\rm cl}\!:\! X \!\to\! \mathbb{R}^n}$:
\begin{equation}
\label{eq:clsystem1}
    \dot{x} = {f}(x) + {g}(x) \mathbf{k}(x) \triangleq f_{\rm cl}(x).
\end{equation}
Hence, given any initial condition ${x_0 \!\triangleq\! x(t_0) \!\in\! X}$ there exists an interval ${\mathcal{I} ({x}_{0} ) \!\triangleq\! \left[t_0, t_{\max }\right)}$ such that the solution of \eqref{eq:clsystem1} produces a unique trajectory for ${t \!\in\! \mathcal{I} ({x}_{0} )}$. Throughout this study, we assume ${f_{\rm cl}}$ is forward complete, i.e., ${\mathcal{I} ({x}_{0} ) \!=\! [0, \infty)}$, and ${U}$ is a convex polytope. 

\subsection{Control Barrier Functions} 
\label{sec:cbfs}
We consider a set ${\C \!\subset\! X }$ defined as the 0-superlevel set of a continuously differentiable function ${h\!:\! X \!\to\! \mathbb{R}}$ as
\begin{equation}
\label{eq:CBF1}
    \C \!\triangleq\! \left\{ x \!\in\! X \!\subset\! \mathbb{R}^n \!:\! h(x) \!\geq\! 0 \right\}, \
    \partial \C \!\triangleq\! \left\{ {x \!\in\! X \!\subset\! \mathbb{R}^n} \!:\! h(x) \!=\! 0 \right\}. 
\end{equation}
This set is \textit{forward invariant} if, for every initial condition ${x(0) \!\in\! \C}$, the solution of \eqref{eq:clsystem1} satisfies ${x(t) \!\in\! \C, ~\forall t \!\geq\! 0}$. The closed-loop system \eqref{eq:clsystem1} is {\em safe} on the \textit{safe set} $\C$ if $\C$ is forward invariant \cite{ames2017control}.

\begin{definition}[IRD] 
The \textit{input relative degree} of a sufficiently differentiable output function ${h \!:\! X \!\to\! \mathbb{R}}$ of system \eqref{system} on a set ${\mathcal{S} \!\subseteq\! X}$ with respect to ${u}$ is defined as an integer ${r \!\leq\! n}$ if ${ \forall x \!\in\! \mathcal{S}}$, ${L_{{{g}}}  L_{{{f}}}^{r-1} h(x)  \!\neq\! 0}$ and ${L_{{{g}}}  L_{{{f}}}^{i-1} h(x) \!=\! 0}$, for ${i \!\in\! \{1, 2, \ldots, r-1\}}$, where the higher-order Lie derivatives are defined recursively as 
\begin{equation*}
L_f^{i} h(x) \!\triangleq\! \grad L_f^{i-1} h(x) f(x) , \ 
L_g L_f^{i-1} h(x) \!\triangleq\! \grad L_f^{i-1} h(x) g(x)  ,
\end{equation*}
where ${\grad L_f^{i-1} h(x) \!\triangleq\! \derp{L_f^{i-1} h(x)}{x} }$, ${L_f^{0} h(x) \!\!\triangleq\! h(x)}$, with ${L_f^{i} h \!:\! X \!\!\to\!\! \mathbb{R}}$, ${\grad L_f^{i-1} h \!:\! X \!\!\to\!\! \mathbb{R}^{1 \times n}}$, and ${L_g L_f^{i-1} h \!:\! X \!\!\to\!\! \mathbb{R}^{m}}$.
\end{definition}

\begin{definition}[CBF, \cite{ames2017control}]
\label{def:cbf}
Let ${\C \!\subset\! X }$ be the 0-superlevel set of a continuously differentiable function ${h\!:\! X \!\to\! \mathbb{R}}$ that has IRD 1. Function $h$ is a \textit{control barrier function} for system \eqref{system} on $\C$ if ${\grad h(x) \!\triangleq\! \frac{\partial h (x)}{\partial x}  }$, satisfies ${\grad h(x) \!\neq\! 0}$ for all ${ x \!\in\! \partial \C}$ and there exists an extended class-$\mathcal{K}_{\infty}$ function\footnote{ 
A continuous function ${\alpha \!:\! [0, a ) \!\to\! \mathbb{R}^+}$, where ${a \!>\! 0}$, belongs to class-${\mathcal{K}}$ (${\alpha \!\in\! \mathcal{K}}$) if it is strictly monotonically increasing and ${\alpha(0) \!=\! 0}$. And, ${\alpha}$ belongs to class-${\mathcal{K}_\infty}$ (${\alpha \!\in\! \mathcal{K}_\infty}$) if ${a \!=\! \infty}$ and ${\lim_{r \!\to\! \infty} \alpha(r) \!=\! \infty}$. A continuous function ${\alpha \!:\! \mathbb{R} \!\to\! \mathbb{R}}$ belongs to the set of extended class-$\mathcal{K}_\infty$ functions (${\alpha \!\in\! \mathcal{K}_{\infty, e}}$) if it is strictly monotonically increasing, ${\alpha(0) \!=\! 0}$, ${\lim_{r \!\to\! \infty} \alpha(r) \!=\! \infty}$ and ${\lim_{r \!\to\! -\infty} \alpha(r) \!=\! -\infty}$. A continuous function ${\beta \!:\! [0, a ) \!\times\! \mathbb{R}^+_0 \!\to\! \mathbb{R}^+_0}$ belongs to class-${\mathcal{K L}}$ (${\beta \!\in\! \mathcal{KL}}$), if for every ${s \!\in\! \mathbb{R}^+_0}$, ${\beta(\cdot, s)}$ is a class-$\mathcal{K}$ function and for every ${r \!\in\! [0, a )}$, ${\beta(r, \cdot) }$ is decreasing and ${\lim_{s \!\to\! \infty} \beta(r, s) \!=\! 0.}$} ${\alpha \!\in\! \mathcal{K}_{\infty, e}}$ such that for all ${x \!\in\! \C}$:
\begin{equation}
\label{cbf}
   \sup_{u \in U} \big [ {\dot{h}(x, u)}  \big ] 
   \!=\! \sup_{u \in U} \big [ L_f h(x) + L_g h(x) u  \big ] 
   \!\geq\! -\alpha (h(x)).
\end{equation} 
\end{definition}
Then, given a CBF, formal safety guarantees can be established with the following theorem, based on Definition~\ref{def:cbf}:
\begin{theorem}
    \label{teo:cbfdef}
    If $h$ is a CBF for \eqref{system} on $\C$ with an ${\alpha \!\in\! \mathcal{K}_{\infty, e}}$, then any Lipschitz continuous controller ${\mathbf{k}\!:\! X \!\to\! U}$ satisfying 
    \begin{equation}
        \label{eq:cbf_def}
        \dot{h}\left(x, \mathbf{k}(x)\right) \geq - \alpha (h(x)),~~\forall x \in \C ,
    \end{equation}
    renders \eqref{eq:clsystem1} safe with respect to $\C$. 
\end{theorem}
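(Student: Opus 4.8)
The plan is to reduce the set-invariance claim to a scalar differential inequality along closed-loop trajectories and then close it with a comparison argument. First I would note that since $\mathbf{k}$ is Lipschitz continuous and $f,g$ are locally Lipschitz, the closed-loop field $f_{\rm cl}$ is locally Lipschitz; hence for each $x_0 \in \mathcal{C}$ the flow \eqref{eq:flow1} furnishes a unique solution $x(t)$. I would then define the scalar signal $y(t) \triangleq h(x(t))$. By the chain rule together with the feasibility hypothesis \eqref{eq:cbf_def}, as long as $x(t) \in \mathcal{C}$ this signal obeys $\dot{y}(t) = L_f h(x(t)) + L_g h(x(t)) \mathbf{k}(x(t)) \geq -\alpha(y(t))$.

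Next I would compare $y$ against the solution $z(t)$ of the scalar initial value problem $\dot{z} = -\alpha(z)$ with $z(0) = y(0) = h(x_0) \geq 0$. Because $\alpha \in \mathcal{K}_{\infty,e}$ satisfies $\alpha(0)=0$, the point $z=0$ is an equilibrium of this ODE, so any solution launched from $z(0)\geq 0$ stays nonnegative for all forward time: it relaxes monotonically toward the origin but cannot cross it. Applying the comparison principle to the differential inequality $\dot{y} \geq -\alpha(y)$ then yields $y(t) \geq z(t) \geq 0$, i.e. $h(x(t)) \geq 0$ and hence $x(t) \in \mathcal{C}$ for all $t \geq 0$. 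This is exactly forward invariance of $\mathcal{C}$, and therefore safety in the sense defined in Section~\ref{sec:cbfs}.

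The main obstacle is the regularity of $\alpha$: as a class-$\mathcal{K}_{\infty,e}$ function it need not be Lipschitz at the origin, so the comparison system $\dot{z} = -\alpha(z)$ may fail to have a unique solution, and the textbook comparison lemma (which assumes a locally Lipschitz right-hand side) does not apply verbatim. I would sidestep this in one of two ways. The first is to invoke a refined comparison principle and exploit the fact that the argument needs only the lower bound $z(t)\geq 0$, not the precise value of $z$, so non-uniqueness of the comparison flow is harmless. The cleaner alternative is a direct boundary (Nagumo-type) contradiction: suppose the trajectory were to leave $\mathcal{C}$, and let $t^\star$ be the first time at which $h(x(t^\star))=0$; continuity forces $\dot{h}(x(t^\star),\mathbf{k}(x(t^\star))) \leq 0$, yet \eqref{eq:cbf_def} gives $\dot{h}(x(t^\star),\mathbf{k}(x(t^\star))) \geq -\alpha(0) = 0$, so the vector field is at worst tangent to $\partial\mathcal{C}$. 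Here the standing regularity condition $\tfrac{\partial h}{\partial x}\neq 0$ on $\partial\mathcal{C}$ guarantees the boundary is a well-behaved level set and prevents the trajectory from crossing it, contradicting the existence of such an escape and completing the proof.
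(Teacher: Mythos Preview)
The paper does not prove this theorem: it is stated as a preliminary result with a citation to \cite{ames2017control} and no accompanying proof. Your proposal therefore cannot be compared against a proof in the paper, but it is worth noting that the approach you sketch---reducing to the scalar inequality $\dot y \geq -\alpha(y)$ and closing either by comparison or by a Nagumo boundary argument---is precisely the standard route taken in the cited CBF literature, so in that sense you have reconstructed the intended argument.

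One small point of care in your Nagumo-style contradiction: at the first exit time $t^\star$ you write that ``continuity forces $\dot h \leq 0$,'' but this is not quite what continuity gives you. A trajectory can touch the boundary with $\dot h(t^\star)=0$ and still exit if higher-order behavior is unfavorable. The clean statement is that the CBF inequality yields $\dot h \geq -\alpha(0)=0$ at every boundary point, which, together with the regularity hypothesis $\partial h/\partial x \neq 0$ on $\partial\mathcal{C}$, places the closed-loop vector field in the Bouligand tangent cone of $\mathcal{C}$; Nagumo's theorem then delivers forward invariance directly, without needing to argue about the sign of $\dot h$ from the exit side. With that adjustment your second route is rigorous, and your first route (comparison with a possibly non-Lipschitz $\alpha$) is also salvageable exactly as you indicate, since only the lower barrier $z(t)\geq 0$ is needed.
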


Given a (possibly unsafe) locally Lipschitz continuous \textit{nominal controller} ${\mathbf{k_d} \!:\! X \!\to\! U}$, and a CBF ${h}$ with related ${\alpha}$ for system \eqref{system}, safety is ensured by solving the quadratic program (CBF-QP) \cite{ames2017control}:   
\begin{align*}
\begin{array}{l}
{\mathbf{k^*}(x)= \ }
\displaystyle  \argmin_{{u}  \in U} \ \ \ {\|u-\mathbf{k_d}(x) \|^2}  \\ [1mm]
~~~~~~~~~~~~\textrm{s.t.} ~~~~~~~~~\dot{h}(x, u)  \geq - \alpha (h(x)).
\end{array}
\end{align*}

Note that when ${L_g h (x) \!\equiv\! 0}$, control inputs do not appear in CBF condition \eqref{cbf}. Because the CBF-QP constraint does not depend on the variable ${u}$, we cannot synthesize a safe controller with this QP. This phenomenon is associated with the IRD, since condition \eqref{cbf} requires ${h}$ to have an IRD of one. In some applications, safety may require differentiation of ${h}$ with respect to system \eqref{system} until the control ${u}$ appears. In such cases, we can construct a higher-order CBF \cite{xiao2019control}.

For a differentiable function ${h \!:\! X \!\to\! \mathbb{R}}$, we consider a sequence of functions ${\phi_i \!:\! X \!\to\! \mathbb{R}}$ for ${i \!\in\! \{1, \ldots, r\!-\!1\}}$:
\begin{equation}
\label{eq:squphi}
    \phi_i (x) \!\triangleq\! \dot{\phi}_{i-1} (x) \!+ \alpha_i ( {\phi}_{i-1} (x) ), \ \phi_0 (x) \!\triangleq\! h(x),  
\end{equation}
where ${\alpha_i \!\in\! \mathcal{K}_{\infty, e}}$ is a ${(r\!-\!i)^{th}}$ order differentiable function. The associated extended sets and their intersections are defined as
\begin{equation}
\label{eq:hocbfset}
    \C_i \!\triangleq\! \{ x \!\in\! X \!:\! {\phi}_{i} (x) \!\geq\! 0 \}  , \
    \C \!\triangleq\! {\textstyle \bigcap\limits_{i = 0}^{r -1}} \C_i ,
\end{equation}
with ${\partial \C_i \!\triangleq\! \{ x \!\in\! X \!:\! {\phi}_{i} (x) \!=\! 0 \}}$. Safety with respect to set $\C$ is guaranteed via an HOCBF by assuming that $h$ has IRD $r$ on $\C$.

\begin{definition}[HOCBF, \cite{xiao2021high}, \cite{tan2021high}]
Let $\C$ be defined by \eqref{eq:hocbfset}. Then, $h$ is a \textit{high-order control barrier function} for system \eqref{system} on ${\C}$ if ${\frac{\partial {\phi}_{i}(x)}{\partial x} \!\neq\! 0}$ for all ${x \!\in\! \partial \C_i}$ and there exists ${\alpha_r \!\in\! \mathcal{K}_{\infty, e}}$ such that ${\forall x \!\in\! {\C }}$:
\begin{equation*}
   \!\!\!\sup_{u \in U} \! \Big [L_f^r h(x) \!+\! L_g L_f^{r-1} h(x) u \!+\! \mathcal{O}(x)\! \Big ] \!  \!\geq\! -\alpha_r ({\phi}_{r-1}(x)) ,
\end{equation*}
where ${\mathcal{O}(x) \!\triangleq\! \sum_{i= 1}^{r-1} L_f^i ( \alpha_{r-i} \circ  {\phi}_{r-i-1} )(x)}$, with ${\mathcal{O} \!:\! X \!\to\! \mathbb{R}}$.
\end{definition}
\begin{theorem}
\label{teo:hocbfdef}
If $h$ is an HOCBF for \eqref{system} on $\C$ defined in \eqref{eq:hocbfset}, then any Lipschitz continuous controller ${\mathbf{k}\!:\! X \!\to\! U}$ satisfying 
\begin{equation}
\label{eq:hocbf_def}
L_f^r h(x) \!+\! L_g L_f^{r-1} h(x) \mathbf{k}(x) \!+\! 
\mathcal{O}(x)  \geq -\alpha_r ({\phi}_{r-1}(x)) ,
\end{equation}
for all ${x \!\in\! \C}$ renders \eqref{eq:clsystem1} safe with respect to $\C$. 
\end{theorem}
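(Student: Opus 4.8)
The plan is to reduce the high-order constraint \eqref{eq:hocbf_def} to a cascade of relative-degree-one differential inequalities, one for each function $\phi_i$ in the chain \eqref{eq:squphi}, and then to propagate nonnegativity from $\phi_{r-1}$ down to $\phi_0 = h$ by repeated application of a comparison-lemma argument. Since $x_0 \in \mathcal{C} = \bigcap_{i=0}^{r-1}\mathcal{C}_i$, we begin from $\phi_i(x_0) \geq 0$ for every $i$, and the goal is to show that each $\phi_i(x(t))$ remains nonnegative along the closed-loop flow \eqref{eq:flow1}, which is exactly the statement $x(t) \in \mathcal{C}$ for all $t \geq 0$.

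First I would establish the algebraic identity that, along \eqref{eq:clsystem1},
\[
\dot{\phi}_{r-1}(x,\mathbf{k}(x)) = L_f^r h(x) + L_g L_f^{r-1} h(x)\,\mathbf{k}(x) + \mathcal{O}(x),
\]
with $\mathcal{O}$ as in \eqref{eq:hocbfsup}. This follows by unwinding the recursion \eqref{eq:squphi}: differentiating $\phi_{i} = \dot\phi_{i-1} + \alpha_i\circ\phi_{i-1}$ and invoking the input-relative-degree hypothesis \eqref{eq:ird_ird}, namely $L_g L_f^{j-1} h \equiv 0$ for $j < r$, the control input cancels from $\dot\phi_0,\dots,\dot\phi_{r-2}$ and appears for the first time in $\dot\phi_{r-1}$ through the single term $L_g L_f^{r-1} h\, u$; collecting the remaining drift terms reproduces $L_f^r h$ together with the nested Lie-derivative sum $\mathcal{O}(x)$. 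With this identity, constraint \eqref{eq:hocbf_def} evaluated at $u = \mathbf{k}(x)$ is precisely
\[
\dot{\phi}_{r-1}(x,\mathbf{k}(x)) + \alpha_r(\phi_{r-1}(x)) \geq 0,
\]
i.e.\ $\phi_r(x,\mathbf{k}(x)) \geq 0$ in the notation of \eqref{eq:squphi} extended to $i=r$.

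Next I would invoke the comparison lemma in the following form: if a locally Lipschitz scalar signal $y(t)$ satisfies $\dot y(t) \geq -\alpha(y(t))$ for some locally Lipschitz $\alpha \in \mathcal{K}_{\infty,e}$ and $y(0) \geq 0$, then $y(t) \geq 0$ for all $t \geq 0$; this holds because the comparison system $\dot z = -\alpha(z)$, $z(0)=y(0)\geq 0$, cannot cross the equilibrium at the origin (where $\alpha(0)=0$) by uniqueness of solutions, so $z(t)\geq 0$, and $y(t)\geq z(t)$ by the comparison principle. Applying this first to $y = \phi_{r-1}(x(\cdot))$ — whose differential inequality is the constraint just derived and whose initial value is nonnegative — yields $\phi_{r-1}(x(t)) \geq 0$ for all $t$, i.e.\ forward invariance of $\mathcal{C}_{r-1}$. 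I would then descend: knowing $\phi_{r-1} \geq 0$ and using $\phi_{r-1} = \dot\phi_{r-2} + \alpha_{r-1}(\phi_{r-2})$ gives $\dot\phi_{r-2} \geq -\alpha_{r-1}(\phi_{r-2})$; since $\phi_{r-2}(x_0)\geq 0$, the same lemma gives $\phi_{r-2}(x(t)) \geq 0$. Iterating down the chain through $\phi_{r-3},\dots,\phi_0 = h$ establishes $\phi_i(x(t)) \geq 0$ for every $i$ and all $t \geq 0$, hence $x(t) \in \bigcap_{i=0}^{r-1}\mathcal{C}_i = \mathcal{C}$, which is the claimed safety.

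The main obstacle is the first step: verifying the explicit identity for $\dot\phi_{r-1}$ and confirming that $u$ does not appear prematurely. The bookkeeping is delicate because each $\phi_i$ carries nested compositions $\alpha_{i}\circ\phi_{i-1}$, so differentiating generates chain-rule factors $\alpha_i'(\phi_{i-1})\dot\phi_{i-1}$ that must be reorganized into the Lie-derivative form of $\mathcal{O}$; the IRD hypothesis is exactly what guarantees that these lower-order derivatives are control-free, so that the only $u$-dependence is the isolated term $L_g L_f^{r-1} h\, u$. A secondary technical point is ensuring the comparison lemma's hypotheses hold — local Lipschitzness of each $\alpha_i$ (furnished by the differentiability assumed in \eqref{eq:squphi}) and existence and uniqueness of the closed-loop solution (furnished by Lipschitz $\mathbf{k}$ and the flow \eqref{eq:flow1}) — while the regularity condition $\partial\phi_i/\partial x \neq 0$ on $\partial\mathcal{C}_i$ keeps each boundary well defined.
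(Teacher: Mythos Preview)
Your proposal is correct and follows the standard cascaded comparison-lemma argument that is used in the cited reference. Note, however, that the paper itself does not prove this theorem: it is stated with a citation to \cite{xiao2019control} and used as a background result, so there is no in-paper proof to compare against. Your argument---rewriting the constraint as $\dot\phi_{r-1} + \alpha_r(\phi_{r-1}) \geq 0$, applying a comparison principle to obtain $\phi_{r-1}(x(t)) \geq 0$, and then descending the recursion $\phi_i = \dot\phi_{i-1} + \alpha_i(\phi_{i-1})$ to conclude $\phi_{i-1}(x(t)) \geq 0$ for each $i$---is precisely the proof strategy of the original reference, so nothing is missing.
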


Finally, using the HOCBF condition, an HOCBF-QP safety filter incorporating \eqref{eq:hocbf_def} can be constructed similarly to the CBF-QP.

\subsection{Analysis of the Matched and Unmatched Uncertainties} 
\label{sec:uncer}
In practice, uncertainties and disturbances cannot be fully modeled. Thus, functions $f$ and $g$ in the true model \eqref{system} may be imprecisely known, leading to degraded system performance and stability. In safe control synthesis, we typically use a control affine \textit{nominal model} to represent our best understanding of the system:  
\begin{equation*}
    \dot{x}  = \hat{f}(x) + \hat{g}(x) u , 
\end{equation*}
where ${\hat{f} \!:\! X \!\to\! \mathbb{R}^n}$, ${\hat{g}\!:\! X \!\to\! \mathbb{R}^{n \times m} }$ are locally Lipschitz continuous on $X$. That is, known functions $\hat{f}$ and $\hat{g}$ are used for controller design, but may differ from the actual model. To properly address the safe control design problem via uncertainty estimation, we assume:
\begin{assumption}
\label{as:gx}
The matrix $\hat{g}$ has full column rank, i.e., for all ${x \!\in\! X}$: ${\rank({\hat{g} (x))} \!=\! m}$, and the left pseudo-inverse of $\hat{g}$: ${\hat{g}^\dagger\!(x) \!=\! \big ( \hat{g}^\top\!(x) \hat{g} (x) \big ) ^{-1} \hat{g}^\top (x)}$, with ${\hat{g}^\dagger \!:\! X \!\to\! \mathbb{R}^{m \times n}}$, exists. 
\end{assumption}
\begin{remark}
    \label{re:uncer}
Most real-world input-affine robotic systems align with Assumption~\ref{as:gx}, which is a common assumption in uncertainty cancellation or disturbance rejection-based robust control \cite{xie2021disturbance}, \cite{sinha2022adaptive}. 
\end{remark}
To represent the discrepancies between the actual and nominal models, we assume that the actual model \eqref{system} can be equivalently described with the uncertain system model:
\begin{equation}
\label{sysun}
    \dot{x}  = \hat{f}(x) + \hat{g}(x) u + \Delta(t,\! x,\! u),
\end{equation}
where ${\Delta\!:\! \mathbb{R}^{+}_0 \!\times\! X \!\times\! U  \!\!\to\! \mathbb{R}^{n} }$ captures modeling inaccuracies and disturbance inputs. Observe that the CBF time derivative depends on ${\Delta}$, whose presence may violate \eqref{cbf}, causing unsafe control:
\begin{equation}
    \label{eq:hdot}
    {\dot{h}(t,\! x,\! u)}  =
   L_{\hat{f}} h(x) \!+\! L_{\hat{g}} h(x) u    
   \!+\!  \grad h(x)  \Delta(t,\! x,\! u).
\end{equation}

If we focus only on state and input-dependent model errors, a specific form of ${\Delta}$ can be found by adding and subtracting the nominal model from \eqref{system}:
\begin{equation}
\label{eq:sysune1}
    \dot{x}  = \hat{f}(x) + \hat{g}(x) u + \overbrace{ {f}(x) - \hat{f}(x)  + \ ( {g}(x) - \hat{g}(x) )  u}^{= \Delta(x, u) }  ,
\end{equation}
where ${{f}(x) \!-\! \hat{f}(x) \!\!\triangleq\! \Delta f(x)}$, ${{g}(x) \!-\! \hat{g}(x) \!\!\triangleq\! \Delta g(x)}$ 
are the unmodelled parts of ${\Delta }$.
Note that model description \eqref{eq:sysune1} implicitly ignores time-varying system disturbances, which implies the system \eqref{system} is time-invariant. Purely time-varying input disturbances in \eqref{sysun} of the form: ${\dot{x}  \!=\! \hat{f}(x) \!+\! \hat{g}(x) u \!+\! d(t)}$, with ${d \!:\! \mathbb{R}^{+}_0 \!\to\! \mathbb{R}^{n}}$, are assumed locally Lipschitz continuous in $t$ over ${t \geq 0}$. 
\begin{remark}
\label{re:uncdist}
In this study, unmodelled dynamics are considered a \textit{disturbance} if they depend only on time, e.g., $d(t)$. Otherwise, they are considered an \textit{uncertainty} when they do not explicitly depend on time, but depend on states and control inputs. The compound modeling inaccuracy ${\Delta(t,\! x,\! u)}$ given in \eqref{sysun} is deemed an \textit{uncertainty}.
\end{remark}

Assumption~\ref{as1} ensures well-posedness of our uncertainty estimator. 
\begin{assumption}
\label{as1}
The uncertainty $\Delta$ in \eqref{sysun} and its time derivative are bounded by some ${\delta_b, \delta_L \!\in\! \mathbb{R}^+}$ for all ${(t,\!x,\!u) \!\in\! \mathbb{R}^{+}_0 \!\times\! X \!\times\! U }$: ${\left\| \Delta(t,\! x,\! u) \right\|  \!\leq\! \delta_b}$, ${
{ \big\| \dot{\Delta}(t,\! x,\! u) \big\|  }  \!\leq\! \delta_L }$.
\end{assumption}
\begin{remark}
\label{re:uncer_exp} 
Assumption~\ref{as1} is reasonable for many practical nonlinear systems, such as wheeled mobile robots \cite{xie2021disturbance}, aerial robots \cite{yan2023surviving}, and ships \cite{du2016robust}. The upper bounds ${\delta_b, \delta_L }$ can be learned from data obtained from system simulations or tests. Moreover, note that ${\delta_b}$ need not be known for our proposed CBF-based safety guarantees, but is needed to characterize the estimation error dynamics. 
\end{remark}

\begin{definition}[Full-State Matched/Unmatched Uncertainty]
For all ${(t,\!x,\!u) \!\in\! \mathbb{R}^{+}_0 \!\times\! X \!\times\! U }$, uncertainty ${\Delta}$ in \eqref{sysun} is the sum of a \textit{full-state matched uncertainty} ${\Delta_m\!:\! \mathbb{R}^{+}_0 \!\times\! X \!\times\! \mathbb{R}^{m}  \!\to\! \mathbb{R}^{n}}$, and a \textit{full-state unmatched uncertainty}, ${\Delta_u\!:\! \mathbb{R}^{+}_0 \!\times\! X \!\times\! \mathbb{R}^{m}  \!\to\! \mathbb{R}^{n}}$: 
\begin{align}
    \label{eq:munmdel}
    \Delta(t,\! x,\! u) = \Delta_m(t,\! x,\! u) + \Delta_u(t,\! x,\! u) ,
\end{align}
where, uncertainty ${\Delta_m(t,\! x,\! u)}$ lies in the image of ${\hat{g}}$: ${\Delta_m(t,\! x,\! u) \!\in\! \Ima(\hat{g}(x))}$. Conversely, uncertainty ${\Delta_u \!\in\! \Delta }$ does not lie in the image of ${\hat{g}}$: ${\Delta_u(t,\! x,\! u) \!\perp\! \Ima(\hat{g}(x))}$.
\label{def:mun1}
\end{definition} 
Since Definition~\ref{def:mun1} implies that $\Delta_m$ can be described as a linear combination of the columns of $\hat{g}$, $\Delta$ can be expressed as
\begin{align}
\label{eq:matched_der}
\Delta(t,\! x,\! u) = \hat{g}(x) \vartheta(t,\! x,\! u) + \Delta_u(t,\! x,\! u), 
\end{align}
where ${\vartheta\!:\! \mathbb{R}^{+}_0 \!\times\! X \!\times\! U  \!\to\! \mathbb{R}^{m}}$ represents the full-state matched part of $\Delta$ in the control input channel of the system. Multiplying both sides of \eqref{eq:matched_der} with $\hat{g}^\top\!(x)$ yields:
\begin{equation*}
\hat{g}^\top\!(x) \Delta(t,\! x,\! u) \!=\! \hat{g}^\top\!(x) \hat{g}(x) \vartheta(t,\! x,\! u) \!+\! \hat{g}^\top\!(x) \Delta_u(t,\! x,\! u),
\end{equation*}
where ${\hat{g}^\top\!(x) \Delta_u(t,\! x,\! u) \!\equiv\! 0 }$ since ${\Delta_u(t,\! x,\! u) \!\perp\! \Ima(\hat{g}(x))}$. 
By Assumption~\ref{as:gx}, ${\hat{g}^\dagger\!(x)}$ exists for all ${x \!\in\! X}$. Therefore, multiplying both sides of this equation by ${( \hat{g}^\top\!(x) \hat{g}(x)  )^{-1}}$ yields: ${\vartheta(t,\! x,\! u)  \!=\! \hat{g}^\dagger\!(x)  \Delta(t,\! x,\! u)}$. Then, the full-state matched part of the augmented uncertainty is obtained by substituting $\vartheta$ into \eqref{eq:matched_der}:
\begin{align}
\label{eq:matched_fin}
\Delta_m(t,\! x,\! u) = \hat{g}(x) \hat{g}^\dagger\!(x)  \Delta(t,\! x,\! u). 
\end{align}
Subtracting \eqref{eq:matched_fin} from \eqref{eq:munmdel} gives the full-state unmatched uncertainty ${\Delta_u}$, which is the projection of ${\Delta}$ onto the null space of $\hat{g}$:
\begin{align}
\label{eq:unmatched_fin}
\Delta_u(t,\! x,\! u) = \big (\mathbf{I} - \hat{g}(x) \hat{g}^\dagger\!(x) \big ) \Delta(t,\! x,\! u). 
\end{align}

Equations \eqref{eq:matched_der} and \eqref{eq:matched_fin} imply that the full-state matched uncertainty influences the system via the control input channel and the uncertain system model \eqref{sysun} is equivalent to  
\begin{equation*}
    \dot{x}  \!=\! \hat{f}(x) + \hat{g}(x)  \big (u + \hat{g}^\dagger\!(x) \Delta(t,\! x,\! u) \big) +  \Delta_u(t,\! x,\! u) .
\end{equation*}
Hence, the time derivative of $h$ in \eqref{eq:hdot} can be rewritten as
\begin{align}
\begin{split}
    \label{eq:equva}
    {\dot{h}(t,\! x,\! u)}  =
   L_{\hat{f}} h(x) + L_{\hat{g}} h(x) \big (u + \hat{g}^\dagger\!(x)  \Delta(t,\! x,\! u) \big )  \\ 
   +  \grad h(x) \big (\mathbf{I} - \hat{g}(x) \hat{g}^\dagger\!(x) \big ) \Delta(t,\! x,\! u) .
\end{split}
\end{align}

Our study uses Definition~\ref{def:mun1} for uncertainty compensation via an uncertainty estimator. However, this definition does not consider a specific state-dependent output function, $h(x)$. The following definitions describe matching conditions that are relevant to a scalar output function. First, like the IRD, we define the \textit{disturbance relative degree} (DRD) for the scalar function ${h}$ with respect to model \eqref{sysun}:
\begin{definition}[DRD]
The \textit{disturbance relative degree} of a sufficiently differentiable output function ${h(x), h\!:\! X \!\to\! \mathbb{R},}$ of the uncertain system \eqref{sysun} on a set ${\mathcal{S} \!\subseteq\! X}$ with respect to $\Delta$, is defined as an integer ${v \!\leq\! n}$ if ${L_{\Delta}  L_{\hat{f}}^{v-1} h(t, x, u)  \!\neq\! 0}$, ${\forall (t, x,u) \!\in\! \mathbb{R}^{+}_0 \!\times\! \mathcal{S} \!\times\! U }$, and ${L_{\Delta}  L_{\hat{f}}^{i-1} h(t, x, u) \!\equiv\! 0}$, where ${L_{\Delta} L_{\hat{f}}^{i-1} h(t, x, u) \triangleq \grad L_{\hat{f}}^{i-1} h(x) \Delta}$, for ${i \!\in\! \{1, \ldots, v-1\}}$. 
\end{definition}

\begin{definition}[Output-Matched (-Unmatched) Uncertainty]
The uncertainty ${\Delta}$ is an \textit{output-matched (-unmatched) uncertainty} for system \eqref{sysun} with respect to a sufficiently differentiable output function ${h \!:\! X \!\to\! \mathbb{R}}$ if ${\text{DRD} = \text{IRD}}$ (${\text{DRD} \neq \text{IRD}}$).
\label{def:mun}
\end{definition} 

\begin{remark}
\label{re:unmatchedre}
Definition~\ref{def:mun1} implies a sufficient (but not necessary) condition for output-matched uncertainty:  $\left ( \Delta_m(t,\! x,\! u) \!\in\! \Ima(\hat{g}(x)) \right ) \!\!\implies\!\! \left( \text{DRD} \!=\! \text{IRD} \right )$.  It is stronger than the condition in Definition~\ref{def:mun}.
The second condition in Definition~\ref{def:mun} is a sufficient condition for full-state unmatched uncertainty: ${\left ( \text{DRD} \!\neq\! \text{IRD} \right )  \!\implies\!\! \left ( \Delta_u(t, \!x, \!u) \!\perp\! \Ima(\hat{g}(x)) \right )}$ for some ${(t,\! x,\!u) \!\in\! \mathbb{R}^{+}_0 \!\times\! X \!\times\! U }$. Practically speaking, this condition implies that even if full-state uncertainty elimination is not possible via feedback, one may be able to eliminate the uncertainty from specific output functions, such as CBF $h$.
\end{remark}

Our study utilizes Definition~\ref{def:mun} in the CBF and HOCBF conditions as they apply to an output function $h$. And, to reduce the complexity of HOCBFs in the presence of output unmatched uncertainty, we assume that the DRD of system \eqref{sysun} is less than IRD by at most one: 
\begin{assumption} 
\label{as:DRD-IRD}
The compound uncertainty $\Delta$ in system \eqref{sysun} satisfies ${\text{DRD} \!\geq\! r \!-\! 1}$ when ${\text{IRD} \!=\! r}$ for ${r \!\geq\! 2}$  with respect to a sufficiently differentiable output function ${h \!:\! X \!\to\! \mathbb{R}}$ of system \eqref{sysun}.
\end{assumption}

Under Assumption~\ref{as:DRD-IRD}, suppose ${\text{IRD} \!=\! r}$ and ${ \text{DRD} \!=\! r \!-\!1}$, and ${r \!\geq\! 2}$, then the higher-order time derivatives of $h$ with respect to the system dynamics \eqref{sysun} are given by ${h^{i}(x)  \!=\!  L_{\hat{f}}^{i} h(x),   i \!\in\! \{1, 2, \ldots, r \!-\! 2\}}$, and
\begin{eqnarray}
\label{eq:hocbfder1}
&{h}^{r-\!1}(t,\! x,\! u) \!=\! L_{\hat{f}}^{r-\!1} h(x) \!+\! L_{\Delta} L_{\hat{f}}^{r-2} h(t,\! x,\! u) , \nonumber \\
&{h}^{r}(t,\! x,\! u) \!=\! L_{\hat{f}}^{r} h(x) \!+\! L_{\hat{g}} L_{\hat{f}}^{r-1} h(x) u \!+\!  L_{\Delta} L_{\hat{f}}^{r-1} h(t,\! x,\! u) \nonumber \\ 
&+ \dfrac{d}{dt} \! \big (\! \grad L_{\hat{f}}^{r-2} h(x)  \! \big ) \! {\Delta}(t,\! x,\! u) \!+\! \grad L_{\hat{f}}^{r-2} h(x)   \Dot{\Delta}(t,\! x,\! u)  .
\end{eqnarray}
One can observe from \eqref{eq:hocbfder1} that, if ${\text{IRD} \!=\! \text{DRD} \!=\! r}$, then
\begin{equation*}
{h}^{r}(t,\! x,\! u) = L_{\hat{f}}^{r} h(x) \!+\! L_{\hat{g}} L_{\hat{f}}^{r-1} h(x) u  \!+\!  L_{\Delta} L_{\hat{f}}^{r-1} h(t,\! x,\! u)  ,
\end{equation*}
which shows that the uncertainty $\Delta$ only affects the highest-order derivative of $h$. When ${(\text{IRD} \!=\! r ) < \text{DRD}}$, the term $\Delta$ disappears from \eqref{eq:hocbfder1}: it does not affect the HOCBF conditions.

Ignoring the effects of $\Delta$ in \eqref{eq:equva} or \eqref{eq:hocbfder1} may lead to unsafe behavior. To remedy this problem, we estimate and bound the uncertainty in a quantifiable way, compensate the full-state matched uncertainty, and incorporate the error bound and unmatched part into the CBF and HOCBF conditions. The problem statement of this work follows:
\begin{problem}
\label{pro:first}
Given the nominal system dynamics ${ \dot{x}  \!=\! \hat{f}(x) \!+\! \hat{g}(x) u}$ in \eqref{sysun} and the upper bounds ${\delta_L, \delta_b \!\in\! \mathbb{R}^{+}_0}$ for ${\Delta }$, which can be a composition of output-matched and output-unmatched uncertainties with respect to CBF ${h}$, and a safe set $\C$ defined by \eqref{eq:CBF1}, synthesize a safe controller under Assumption~\ref{as:DRD-IRD}. 
\end{problem}

\section{Main Result}
\label{sec:main}
This section first proposes and analyzes an uncertainty estimator. Then, we introduce our estimator-based robust CBF frameworks. 

\subsection{Uncertainty Estimator}
\label{sec:ue}
To estimate the uncertainty ${\Delta}$ in \eqref{sysun}, we propose an uncertainty estimator with the following structure:
\begin{align}
    \label{eq:bhat}
    \hat{\Delta}(x,\xi) &= {\Lambda} x - \xi, \\
    \dot{\xi} &= \Lambda \big(\hat{f}(x) + \hat{g}(x) u + \hat{\Delta}(x,\xi) + s_n \big), 
    \label{eq:xidot}
\end{align}
where ${\hat{\Delta} \!:\! X \!\times\! \mathbb{R}^{n} \!\to\! \mathbb{R}^{n}}$ is the estimated uncertainty, ${\xi \!\in\! \mathbb{R}^{n}}$ is an auxiliary state vector, ${0 \!\prec\! \Lambda \!\in\! \mathbb{R}^{n \times n}}$ is a positive definite  design matrix, and ${s_n \!\in\! \mathbb{R}^{n}}$ is bounded measurement noise: ${\|s_n(t)\|_{\infty} \!\triangleq\! \sup_{t} {\|s_n(t)\|} \!\leq\! \bar{s}_n}$ for some known ${\bar{s}_n \!\in\! \mathbb{R}^+_0}$. Without loss of generality, we set the initial value of ${\hat{\Delta}}$, ${\hat{\Delta}(0) \!=\! {\bf 0}}$, by assigning ${\xi(0) \!=\! \Lambda x(0)}$ in \eqref{eq:bhat}. The uncertainty estimation error is 
\begin{align}
\label{eq:errdot}
    e(t,\! x,\! u) = {\Delta}(t,\! x,\! u) - \hat{\Delta}(t) , ~ e_0 \triangleq e(t \!=\! 0) \in \mathbb{R}^{n}.
\end{align}
By slight abuse of notation, we denote ${\hat{\Delta}(x,\xi})$ briefly as ${\hat{\Delta}(t)}$.

The uncertainty estimator form in \eqref{eq:bhat}, \eqref{eq:xidot} implicitly assumes known control inputs and full state measurement/estimation. This estimator extends the disturbance observer studied in \cite{dacs2022robust, alan2022disturbance,stotsky2002} to nonlinear systems with multiple disturbance inputs. The following Lemma characterizes the uniformly ultimately boundedness property of the estimation error dynamics. 
\begin{lemma}
Consider the uncertain system \eqref{sysun} with a compound uncertainty function ${\Delta \!:\! \mathbb{R}^{+}_0 \!\times\! X \!\times\! U  \!\to\! \mathbb{R}^{n} }$ that satisfies Assumption~\ref{as1} with upper bounds ${\delta_L, \delta_b \!\in\! \mathbb{R}^{+}_0}$, and the uncertainty estimator \eqref{eq:bhat}-\eqref{eq:xidot} with a positive definite matrix ${\Lambda \!\in\! \mathbb{R}^{n \times n}}$. Then, error $e$ given in \eqref{eq:errdot} is uniformly ultimately bounded.
\label{lem:Velyap}
\end{lemma}
\begin{proof}
From Equations \eqref{sysun}, \eqref{eq:bhat}, \eqref{eq:xidot} we have
\begin{equation}
    \dot{e}(t,\! x,\! u) = \dot{{\Delta}}(t,\! x,\! u) - \Lambda e(t,\! x,\! u) + \Lambda s_n.
    \label{eq:dote}
\end{equation}
By omitting the dependence on (${t,\! x,\! u}$) for simplicity, we consider a candidate Lyapunov function: ${V_e \left (e \right) \!=\! \frac{1}{2} e^\top e}$, with ${V_e \!:\! \mathbb{R}^{n} \!\to\!  \mathbb{R}^+_0}$. The time derivative of ${V_e}$ along the trajectory of \eqref{eq:dote} satisfies: 
\begin{align}
\begin{split}
    \dot{V}_e &= e^\top ( \dot{{\Delta}} - \Lambda e + \Lambda s_n  ) 
     \leq - e^\top \Lambda e + e^\top \Lambda s_n  +  \| e  \| \| \dot{{\Delta}}  \| \\
    & \leq - e^\top \Lambda e + e^\top \Lambda s_n +  \| e  \| \delta_L \\
     & \leq - \lambda_{\rm min} (\Lambda)  \| e   \| ^2 + \lambda_{\rm max} (\Lambda)  \| e   \| \bar{s}_n +  \| e  \| \delta_L. 
     \label{eq:dotV_intermediate}  
 \end{split}
\end{align}
The last inequality is Rayleigh's inequality for a real symmetric positive definite matrix $\Lambda$: ${\lambda_{\rm min}(\Lambda) \| e \| ^2 \!\leq\! e^\top \Lambda e \!\leq\! \lambda_{\rm max}(\Lambda) \|e\|^2}$, where $\lambda_{\rm min}(\Lambda)$ is the smallest eigenvalue of $\Lambda$, and $\lambda_{\rm max}(\Lambda)$ is the largest eigenvalue of $\Lambda$. To replace ${\| e \| ( \delta_L \!+ \lambda_{\rm max} (\Lambda)  \bar{s}_n)}$ with an upper bound in \eqref{eq:dotV_intermediate}, we introduce an inequality, ${\big ( \| e \| \sqrt{\lambda_{\rm min}(\Lambda)}  \!-\!  {( \delta_L \!+\! \lambda_{\rm max} (\Lambda)  \bar{s}_n)}/\!{\sqrt{\lambda_{\rm min}(\Lambda)}} \big )^2 \!\geq\! 0 }$, which yields: ${ \| e \| ( \delta_L \!+\! \lambda_{\rm max} (\Lambda)  \bar{s}_n) \!\leq\! \frac{1}{2} {\lambda_{\rm min}(\Lambda)} \| e \| ^2 \!+\! \frac{( \delta_L \!+\! \lambda_{\rm max} (\Lambda)  \bar{s}_n)^2}{2 \lambda_{\rm min}(\Lambda)} }$. Substituting this inequality into \eqref{eq:dotV_intermediate} yields:
\begin{equation}
\dot{V}_e   \leq - \frac{{\lambda_{\rm min}(\Lambda)} }{2} \| e \| ^2 + \frac{( \delta_L + \lambda_{\rm max} (\Lambda)  \bar{s}_n)^2}{2 \lambda_{\rm min}(\Lambda)}.
 \label{eq:er_iss}
\end{equation}

Under Assumption~\ref{as1}, and the initialization ${\hat{\Delta}(0) \!=\! {\bf 0}}$, the initial estimation error will be bounded by ${\| e_0 \| \!\leq\! \delta_b}$. The solution to ${e}$ in \eqref{eq:dote} with state transition matrix ${ {\rm e}^{-\Lambda t}}$ is given by
\begin{equation}
\label{eq:transition}
    {e}(t) =  {\rm e}^{-\Lambda t} e_0 + \int_{0}^t  {\rm e}^{-\Lambda (t-\tau)} ( \dot{{\Delta}} + \Lambda s_n)  (\tau) d \tau .
\end{equation}
Taking the norms of both sides of \eqref{eq:transition} and using the triangle inequality, then integrating the right-hand side yields ${\forall t \!\in\! \mathbb{R}^{+}_0}$:
\begin{equation*}
     \| {e}(t) \| \leq \big \|{\rm e}^{-\Lambda t} \big \|  \underbrace{\|  e_0 \|}_{\leq \delta_b} + \int_{0}^t \big \| {\rm e}^{-\Lambda (t-\tau)} \big \| \underbrace{\| \dot{{\Delta}}(\tau) + \Lambda s_n \|}_{\leq \delta_L \!+ \lambda_{\rm max} (\Lambda)  \bar{s}_n } d \tau  , 
\end{equation*}
To bound $\big \|{\rm e}^{{-\Lambda} t} \big \|$, consider a Lyapunov equation: $ { (-\Lambda)}^\top P \!+\! P {(-\Lambda)}  + H = 0 $, where $P, H \!\in\! \mathbb{R}^{n \times n}$, $H \!=\! H^\top \!\succ\! 0$, and $P \!=\! P^\top \!\succ\! 0$ is the unique equation solution. Then, we have $\big \|{\rm e}^{{-\Lambda} t} \big\| \leq D  {\rm e}^{ -\tau_e t }$, where $D \!\triangleq\!  \sqrt{\| P \| \| P^{-1}\| }\!$~, $ \tau_e \!\triangleq\! \lambda_{\rm min} (H) /  ( {2 \| P \|}   )$. Then, we obtain
\begin{equation*}
     \| e(t) \| \!\!\leq\! D \! \Big(\!\delta_b \!- \frac{\delta_L \!\!+\! \lambda_{\rm max} (\Lambda) \bar{s}_n}{\tau_e} \! \Big) \text{e}^{-\tau_e t} \!+ \frac{D (\delta_L \!\!+\! \lambda_{\rm max} (\Lambda) \bar{s}_n \! )}{\tau_e}  \!\!\triangleq\! \bar{e}(t) . 
\end{equation*}
We can also simply take ${D = 1}$ and ${\tau_e = \lambda_{\rm min} (\Lambda)}$ if ${\Lambda}$ is symmetric positive definite.
\vskip -0.1 true in
\end{proof}
\vskip -0.1 true in
We use the estimated uncertainty to compensate for the matched uncertainty via a composite feedback law that minimizes discrepancies between the actual and nominal system models. To achieve this, we integrate the uncertainty estimator into the control:
\begin{equation}
    \label{eq:u_hat}
    \mathbf{{k}} (x, \hat{\Delta}) = \mathbf{\bar{k}}(x) - \hat{g}^\dagger\!(x) \hat{\Delta}  \triangleq \mathbf{\bar{k}}(x) -
     \mathbf{k}_{\hat{\Delta}} (x, \hat{\Delta}), 
\end{equation}
where ${ \bar{u} \!=\! \mathbf{\bar{k}}(x)}$ is the controller that is designed for the nominal system, ${u_{\hat{\Delta}} \!=\! \mathbf{k}_{\hat{\Delta}}(x, \hat{\Delta})}$, ${\mathbf{k}_{\hat{\Delta}} \!:\! X \!\times\! \mathbb{R}^{n} \!\to\! U}$ is the projection of the matched estimated uncertainty onto the system's input channel for uncertainty attenuation (see Fig.~\ref{fig:sector}), and ${u \!=\! \mathbf{{k}} (x, \hat{\Delta})}$, ${\mathbf{{k}} \!:\! X \!\times\! \mathbb{R}^{n} \!\to\! U}$ is the \textit{composite control law}. Since ${u_{\hat{\Delta}}}$ lies in the image of ${\hat{g}}$, it can be compensated via ${\mathbf{\bar{k}}}$. Substituting control law \eqref{eq:u_hat} into the uncertain system dynamics \eqref{sysun} results in the closed-loop dynamics:
\begin{equation}
\label{syscon}
     \dot{x}  =  \hat{f}(x) + \hat{g}(x) \mathbf{\bar{k}}(x) + \hat{g}(x) \hat{g}^\dagger(x) e(t, x, u) + \Delta_u(t, x, u) ,
\end{equation} 
which depends on the uncertainty estimation error ${e}$ and the full-state unmatched uncertainty ${\Delta_u}$.

\subsection{Uncertainty Estimation-Based Safety with CBFs}
To guarantee CBF robustness, we must consider the uncertainty ${\Delta}$ in our safety analysis, even though it cannot be measured in practice. To address this issue, we replace ${\Delta}$ with the estimated uncertainty ${\hat{\Delta}}$ and a lower bound on the estimation error ${{e}}$. Theorem~\ref{th:met3} summarizes the robust safety assurances provided by uncertainty estimation and full-state matched uncertainty cancellation. Note that here we consider CBFs with ${\text{IRD} \!=\! 1}$; therefore, from Assumption~\ref{as:DRD-IRD}, ${\text{DRD} \!\geq\! 1}$.  
\begin{theorem}
\label{th:met3}
Consider system \eqref{sysun} with uncertainty ${\Delta}$ that includes full-state matched and unmatched components and satisfies Assumption~\ref{as1}, an uncertainty estimator and a feedback control law ${{u} \!=\!  \bar{u} \!-\! \hat{g}^\dagger\!(x) \hat{\Delta}(t)  }$ given in \eqref{eq:bhat}, \eqref{eq:xidot}, and \eqref{eq:u_hat}, respectively. Let $h$ be a CBF for \eqref{sysun} on set ${\C \!\triangleq\! \left\{ x \!\in\! X  ~|~ h(x) \!\geq\! 0 \right\}}$ such that ${\grad h \!\neq\! 0}$ for all ${ x \!\in\! \partial \C}$. Any Lipschitz continuous controller ${\bar{u} \!=\! \mathbf{\bar{k}}(x)}$ satisfying
\begin{align} 
\begin{split}
\label{eq:thm4_set}
L_{\hat{f}} h(x) + L_{\hat{g}} h(x) ({\mathbf{\bar{k}}(x) - \hat{g}^\dagger\!(x) \hat{\Delta}(t)} ) + \grad h(x) \hat{\Delta}(t) \\
- \norm{\grad h(x)} \bar{e}(t) \geq -\alpha (h(x)),
\end{split}
\end{align}
${\forall x \!\in\! \C}$ renders set ${ \C }$ forward invariant: ${x(0) \!\in\! \C \!\!\! \implies\!\!\! x(t) \!\in\! \C, \forall t \!\geq\! 0}$.
\end{theorem}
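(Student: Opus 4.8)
The plan is to show that, along the closed-loop trajectories generated by the composite control law, the true safety derivative $\dot h$ satisfies the standard CBF inequality $\dot h \geq -\alpha(h(x))$, and then to invoke Theorem~\ref{teo:cbfdef} (equivalently, the comparison lemma) to conclude forward invariance of $\mathcal{C}$. The only nontrivial ingredient is replacing the unmeasurable uncertainty $\Delta$ by the computable estimate $\hat\Delta$ plus a certified residual controlled by the bound $\bar e(t)$ from Lemma~\ref{lem:Vb}.

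First I would substitute the composite control law $u = \bar u - \hat g^\dagger(x)\hat\Delta(t)$ into the uncertain CBF derivative \eqref{eq:hdot}, giving
\[
\dot h = L_{\hat f} h(x) + L_{\hat g} h(x)\bigl(\bar u - \hat g^\dagger(x)\hat\Delta(t)\bigr) + \dfrac{\partial h(x)}{\partial x}\Delta(t,x,u).
\]
The key algebraic step is to use the definition of the estimation error $e = \Delta - \hat\Delta$ from \eqref{eq:errdot} to split the uncertainty term as $\tfrac{\partial h}{\partial x}\Delta = \tfrac{\partial h}{\partial x}\hat\Delta + \tfrac{\partial h}{\partial x}e$. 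This isolates the known, computable contribution $\tfrac{\partial h}{\partial x}\hat\Delta$ from the unknown residual $\tfrac{\partial h}{\partial x}e$, which is the only term that cannot be evaluated online.

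Next I would lower-bound the residual by Cauchy--Schwarz together with the error bound of Lemma~\ref{lem:Vb}: since $\|e(t)\| \leq \bar e(t)$, we have $\tfrac{\partial h}{\partial x}e \geq -\norm{\frac{\partial h}{\partial x}}\,\|e\| \geq -\norm{\frac{\partial h}{\partial x}}\,\bar e(t)$. Substituting this into the expression for $\dot h$ shows that $\dot h$ is bounded below precisely by the left-hand side of the assumed constraint \eqref{eq:thm4_set}; hence along every closed-loop trajectory $\dot h \geq -\alpha(h(x))$.

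Finally, with the pointwise differential inequality $\dot h(x(t)) \geq -\alpha(h(x(t)))$ in hand, I would close the argument exactly as in Theorem~\ref{teo:cbfdef}, concluding via the comparison lemma that any solution with $h(x(0)) \geq 0$ remains in $\{x : h(x) \geq 0\} = \mathcal{C}$ for all $t \geq 0$. I expect the main subtlety—rather than a genuine obstacle—to be the bookkeeping of the augmented state: the actual trajectory is produced by the coupled $(x,\xi)$ dynamics, so I must note that the Lipschitzness of $\mathbf{\bar k}$ yields a well-posed closed loop, that the estimator is driven by the true applied input so that the bound $\bar e(t)$ of Lemma~\ref{lem:Vb} holds along this trajectory, and that the resulting inequality is time-varying through $\hat\Delta(t)$ and $\bar e(t)$ yet still amenable to the comparison lemma.
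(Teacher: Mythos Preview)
Your proposal is correct and mirrors the paper's own proof essentially step for step: substitute the composite law into $\dot h$, split $\Delta = \hat\Delta + e$, lower-bound $\tfrac{\partial h}{\partial x}e$ by $-\norm{\tfrac{\partial h}{\partial x}}\bar e(t)$ using Lemma~\ref{lem:Vb}, and invoke Theorem~\ref{teo:cbfdef}. Your additional remarks on well-posedness of the augmented $(x,\xi)$ closed loop and the time-varying comparison argument are sound and slightly more careful than what the paper writes.
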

\begin{proof} 
From \eqref{eq:u_hat} and \eqref{syscon}, the closed-loop system
dynamics is:
\begin{equation}
    \label{eq:proof_theo} 
    \Dot{x} =  \hat{f}(x) + \hat{g}(x) \big ({\mathbf{\bar{k}}(x) - \hat{g}^\dagger\!(x) \hat{\Delta}(t)} \big ) +  \hat{\Delta}(t) + e(t).
\end{equation}
According to Theorem~\ref{teo:cbfdef}, safety is ensured with respect to $\C$ by CBF condition \eqref{eq:cbf_def} assuming that ${\text{IRD} \!=\! 1}$. Substituting \eqref{eq:proof_theo} into \eqref{eq:cbf_def} yields:
\begin{align}
\begin{split}
    \label{pro:main_pro2}
    L_{\hat{f}} h(x) + L_{\hat{g}} h(x) \big ({\mathbf{\bar{k}}(x) - \hat{g}^\dagger\!(x) \hat{\Delta}(t)} \big ) + \grad h(x) \hat{\Delta}(t) \\
+ \grad h(x) e(t) \geq -\alpha (h(x)) , ~~\forall x \in \C,
\end{split}
\end{align}
where estimation error $e$ is unknown. Thus, ${\grad h(x) e(t)  }$ is also unknown. But, Lemma~\ref{lem:Velyap} upper bounds the estimation error. To obtain a sufficient CBF condition, we replace the unknown term in \eqref{pro:main_pro2} with the lower bound ${\left(- \norm{\grad h(x)} \bar{e}(t)   \right)}$: 
\begin{align}
    \label{pro:main_pro3}
      L_{\hat{f}} h(x) + L_{\hat{g}} h(x) \big({\mathbf{\bar{k}}( x) - \hat{g}^\dagger\!(x) \hat{\Delta}(t)} \big) + \grad h(x) \hat{\Delta}(t)  \nonumber \\
 - \norm{\grad h(x)} \bar{e}(t) \geq -\alpha (h(x)), ~~\forall x \in \C,  
\end{align}
which implies that ${\Dot{h}(t,\! x,\! u) \!\geq\! -\alpha (h(x)), \forall x \!\in\! \C}$, where ${\dot{h}}$ is given in \eqref{eq:equva}; therefore, ${x(t) \!\in\! \C, ~\forall t \!\geq\! 0}$ when ${x_0 \!\in\! \C}$.
\end{proof}
We remark that given a locally Lipschitz continuous nominal controller, $\mathbf{k_d}$, Theorem~\ref{th:met3} allows for the design of a safe controller by formulating an uncertainty estimation and cancellation-based robust safety filter, where the constraint is expressed by \eqref{pro:main_pro3}. This approach enables the implementation of the controller as an efficient QP.

Note that the robust CBF condition \eqref{pro:main_pro3} explicitly depends on the time-varying upper error bound, ${\bar{e}}$. A similar method was used in \cite{dacs2022robust} to address robustness against unmodeled system dynamics by estimating the effect of the disturbance on the CBF. While safety was guaranteed in \cite{dacs2022robust}, uncertainty compensation via estimation was not considered, and the proposed safety filter may perform poorly.

By Lemma~\ref{lem:Velyap}, the estimation error is bounded, which is characterized by the Lyapunov function ${V_e}$. Using this property, we incorporate the uncertainty estimator \eqref{eq:bhat}, \eqref{eq:xidot} into the CBF construction with the augmentation of the existing CBF. In particular, we modify the CBF $h$ to provide robustness against the estimation error $e$:
\begin{equation}
\label{hecbf}
    h_V (x, e)  \triangleq h(x) - \sigma_V V_e(e) ,  
\end{equation}
where ${\sigma_V \!\in\! \mathbb{R}^+}$. And, ${\C_V \!\triangleq\! \big \{ [x~e]^\top \!\in\! X \!\times\! \mathbb{R}^{n}  \!~|\!~ h_V(x, e) \!\geq\! 0 \big \}}$ is a subset of safe set ${\C \!\triangleq\! \big\{ x \!\in\! X  ~|~ h(x) \!\geq\! 0 \big\}}$ since ${V_e}$ is a Lyapunov function. We assume that ${\frac{\partial h_V}{\partial x_e} \!\neq\! 0}$, for all ${\partial {\C_V}}$, where ${x_e \!\triangleq\! [x~e]^\top}$. The following theorem uses the estimator dynamics to relate the controllers designed for nominal system safety to the safety of the uncertain system. Note that for Theorem \ref{theo:main} we use a specific extended class-${\mathcal{K}_{\infty, e}}$ function, ${\alpha (h(x)) = \alpha_h h(x)}$, where ${\alpha_h \!\in\! \mathbb{R}^+}$, for simplicity. The proof drops function arguments for convenience.
\begin{theorem}
\label{theo:main}
Under the same assumptions of Theorem~\ref{th:met3}, let $h_V$ be a CBF for \eqref{sysun} on ${\C_V}$. Consider the Lyapunov function of the uncertainty estimator dynamics with the inequality in \eqref{eq:er_iss}: ${\Dot{V}_e  \!\leq\! -2 \mu_e \| e \| ^2 \!+\! \gamma}$, with ${\mu_e \!\triangleq\! {\lambda_{\rm min}(\Lambda)}/{4} }$ and ${ \gamma \!\triangleq\! \frac{( \delta_L \!+ \lambda_{\rm max} (\Lambda)  \bar{s}_n)^2}{2 \lambda_{\rm min}(\Lambda)}}$, and a constant ${\alpha_h \!\in\! \mathbb{R}^+}$ such that ${\mathcal{E} \!\triangleq\! 2 \sigma_{V}\mu _e \!-\! {\sigma_V \alpha_h}/{2}  \!>\! 0}$. Then, any Lipschitz continuous controller ${\bar{u} \!=\! \mathbf{\bar{k}}(x)}$ satisfying
\begin{align}
\begin{split}
\label{heineq}
    L_{\hat{f}} h(x) + L_{\hat{g}} h(x) \big ( \mathbf{\bar{k}}(x) - \hat{g}^\dagger\!(x) \hat{\Delta}(t) \big )  + \grad h(x) \hat{\Delta}(t) \\ \geq  -\alpha_h h(x) 
 + \frac{1}{4 \mathcal{E} } \norm{\grad h(x)} ^2  + \sigma_V \gamma , 
\end{split}
\end{align}
for all ${x_e \!\in\! \C_V}$ renders the set ${ \C_V }$ forward invariant. 
\end{theorem}
\begin{proof}
Our goal is to prove that ${\dot{h}_V \!\geq\!  -\alpha_h h_V, ~ \forall t \!\geq\! 0}$, which implies that ${\dot{h} \!\geq\!  -\alpha_h h, ~ \forall t \!\geq\! 0}$. The time derivative of ${h_V}$ in \eqref{hecbf} satisfies:
\begin{align*}
\begin{split}
  \dot{h}_V &\!=\!  \dot{h} - \sigma_V \dot{V}_e  \\
  &\!\geq\! \dot{h} - \sigma_V (-2\mu_e \|e\|^2 + \gamma ) \\
    &\!=\! L_{\hat{f}} h \!+\! L_{\hat{g}} h ({\mathbf{\bar{k}}(x) \!-\! \hat{g}^\dagger \hat{\Delta}} )  \!+\! \grad h  \hat{\Delta} 
    \!+\! \grad h  e + 2 \sigma_V \mu_e \|e\|^2 \!-\! \sigma_V \gamma \\
     &\!=\!  L_{\hat{f}} h \!+\! L_{\hat{g}} h ({\mathbf{\bar{k}}(x) \!-\! \hat{g}^\dagger \hat{\Delta}} )  \!+\! \grad h  \hat{\Delta} 
     \!+\! \grad h  e
     - \sigma_V \gamma \\
     &~~+ \underbrace{\left ( 2 \sigma_V \mu_e -  {\sigma_V\alpha_h}/{2} \right )}_{= \mathcal{E}} \|e\|^2 + {\sigma_V\alpha_h} \|e\|^2 / {2}  
\end{split}
\end{align*}
\begin{align*}
\begin{split}     
     &\!=\! L_{\hat{f}} h \!+\! L_{\hat{g}} h ({\mathbf{\bar{k}}(x) \!-\! \hat{g}^\dagger \hat{\Delta}} )  \!+\! \grad h  \hat{\Delta}  - \sigma_V \gamma     \\
     &~~+ \Big ( \sqrt{\mathcal{E}}e +  
     \dfrac{1}{2 \sqrt{ \mathcal{E}} } {\grad h}  \Big )^2 - \dfrac{1}{4 \mathcal{E} } \norm{\grad h} ^2 + {\sigma_V\alpha_h} \|e\|^2 / {2}  \\
     &\!\!\!\!\!\geq\!\! \underbrace{L_{\hat{f}} h \!+\!\! L_{\hat{g}} h ({\mathbf{\bar{k}}(x) \!\!-\!\! \hat{g}^\dagger \hat{\Delta}} \!)  \!+\!\! \grad h  \hat{\Delta}  \! -\! \dfrac{1}{4 \mathcal{E} } \! \norm{\grad h} ^2 \!\!-\! \sigma_V \gamma }_{\geq -\alpha_h h} \!+  {\sigma_V \alpha_h \|e\|^2}\!/{2} \\
     &\implies \dot{h}_V \geq -\alpha_h h + {\sigma_V \alpha_h \|e\|^2}/{2} = -\alpha_h h_V
\end{split}
\end{align*}
This result guarantees that ${x_e(t) \!\in\! \C_V}$, ${\forall t \!\geq\! 0}$ if ${x_e(0) \!\in\! \C_V}$, which implies 
${h(x(t)) \!\geq\! 0}$, ${\forall t \!\geq\! 0}$. 

The second proof line uses the upper bound of ${V_e}$ found in \eqref{eq:er_iss}. Note that ${\dot{h}}$ is given in \eqref{pro:main_pro2}: it is used in the third proof line. The next line uses a square to define a new lower bound. Finally, we obtain the main statement using the condition in the theorem.
\end{proof}
\begin{remark}
\label{re:main3}
Condition ${2 \sigma_{V}\mu _e \!-\! {\sigma_V \alpha_h}/{2}  \!>\! 0 \!\!\iff\!\! \lambda_{\rm min}(\Lambda) \!>\! \alpha_h}$ requires the uncertainty estimation dynamics represented by $\lambda_{\rm min}(\Lambda)$ to be faster than the exponential rate of the CBF constraint, which is characterized by $\alpha_h$. For a given uncertainty estimator, one can select an $\alpha_h$ value that is sufficiently small in practical applications.
\end{remark}
\begin{remark}
\label{re:main2}
For Theorem~\ref{theo:main}, we require ${x_e(0) \!\in\! \C_V}$. This condition requires ${h(x_0) \!\geq\! \sigma_V \! V_e(e_0)}$, which is a more stringent condition than ${x_0 \!\in\! \C}$. This conservatism can be minimized by reducing the initial estimation error ${e(0)}$, which results in a smaller value of ${\sigma_V \! V_e(0)}$.
\end{remark}
\begin{remark}
\label{re:main1}
Safety condition \eqref{heineq} is not explicitly affected by the time-varying estimation error $e$, whereas \eqref{eq:thm4_set} depends on it. Hence, Theorem~\ref{theo:main} gives design flexibility even when the upper bound ${\delta_b}$ is unknown, as the upper bound of $\|e\|$, ${\bar{e}(t)}$, explicitly depends on ${\delta_b}$. 
\end{remark}

Finally, given a nominal controller ${\mathbf{k_d}}$, a CBF $h$, and parameters $\alpha_h$, ${\delta_L}$, ${\Lambda}$ for system \eqref{sysun}, to design an optimal robust safe controller ${{\bar{u}^*} \!\!=\! \mathbf{\bar{k}^*}(t,\!x),~\mathbf{\bar{k}^*} \!:\! \mathbb{R}^{+}_0 \!\! \times \!\! X \!\!\to\! \mathbb{R}^m}$ satisfying \eqref{heineq}, we define an uncertainty estimation-based robust safety filter, {UE-CBF-QP}:
\begin{align*}
\begin{array}{lllll}
{\mathbf{\bar{k}^*}(t,x)= \ }
\displaystyle  \argmin_{\bar{u} \in \mathbb{R}^m} \ \ \ \ {\|\bar{u}-\mathbf{k_d}(x)\|^2}  \\ [2mm]
 ~~~~~~~~~~~~~~~~~\textrm{s.t.} \ \ \ ~~~~~\eqref{heineq} \ \ {\rm and}\ \ \bar{u} - \hat{g}^\dagger\!(x) \hat{\Delta}(t) \in U,
\end{array}
\end{align*}
which is depicted in Fig.~\ref{fig:sector}.
\begin{figure}
	\centering
	\includegraphics[width=1\linewidth]{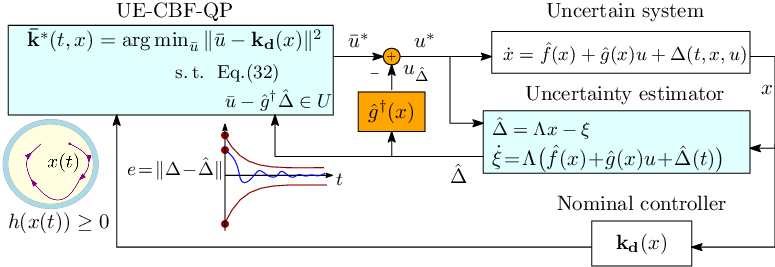}
	\caption{Block diagram of the uncertainty estimator-based safe control framework. Augmenting a given, and potentially unsafe, controller with an error-bounded uncertainty estimator and a safety filter guarantees that uncertain system states remain in a subset of the safe set.}
 \vspace{-2 mm}
	\label{fig:sector}
 \vskip -2 mm
\end{figure}

\subsection{Uncertainty Estimation-Based Safety with HOCBFs}
This subsection extends the uncertainty estimation-based robust safe control design method to HOCBFs in the presence of unmatched uncertainty under Assumption~\ref{as:DRD-IRD}.

Suppose ${\text{IRD} \!=\! r}$ and ${\text{DRD} \!=\! r \!-\! 1}$, and ${r \!\geq\! 2}$, the higher-order time derivatives of $h$ with respect to system \eqref{sysun} are given in \eqref{eq:hocbfder1}. We modify \eqref{eq:squphi} with respect to the uncertain system \eqref{sysun}, which results in:
\begin{align}
\begin{split}
\label{eq:secphi}
    &\phi_0 (x) \triangleq h(x); \ \phi_1 (x) \triangleq \dot{h}(x) + \alpha_1 ( {\phi}_{0} (x) ),
 \\
 & \phi_{r-1} (t,\! x,\! u) \triangleq 
 L_{\Delta} L_{\hat{f}}^{r-2} h(t,\! x,\! u) + \Psi (x), 
 \end{split}
\end{align}
where ${ \Psi (x) \!\triangleq\! L_{\hat{f}}^{r-1} h(x) \!+\!  \sum_{i= 1}^{r-1} L_{\hat{f}}^{i-1} ( \alpha_{r-i} \circ  {\phi}_{r-i-1} )(x)}$.
Note that $\Delta$ first appears in ${ \phi_{r-1}}$ as it depends on ${{h}^{r-1}}$, which depends on $\Delta$; see \eqref{eq:hocbfder1}. We also define the following sets and their intersections: 
\begin{equation}
\label{eq:hocbfset2}
    \C_i(t) \!\triangleq\! \{ (x,u) \!\in\! X \!\times\! U  \!:\! {\phi}_{i} (t,\! x,\! u) \!\geq\! 0 \} , \  
    \C(t) \!\triangleq\! {\textstyle \bigcap\limits_{i = 0}^{r -1}} \C_i(t)  ,
\end{equation}
which is a control input-dependent and time-varying set.
According to Theorem~\eqref{teo:hocbfdef}, safety is ensured with respect to set $\C(t)$ by the \textit{robust HOCBF condition} ${\forall x \!\in\! \C(t)}$: ${\Dot{\phi}_{r-1}(t,\! x,\! u)  \!\geq\! -\alpha_r ( {\phi}_{r-1}(t,\! x,\! u)  )}$, which can be expressed in terms of function $h$ as:
\begin{align}
\label{eq:hocbfproofmod}
      &  L_{\hat{f}}\Psi (x) \!+\! L_{\hat{g}} \Psi (x) u \!+\! \grad  \Psi (x)  {\Delta}(t,\! x,\! u) \!+\!  \grad L_{\hat{f}}^{r-2} h(x) \!~    \dot{\Delta}(t,\! x,\! u)  \nonumber \\
      &\!+\! \big ( \hat{f}(x) \!+\! \hat{g}(x) u \!+\! {\Delta}(t,\! x,\! u) \! \big )^\top  \mathcal{F}(x)  {\Delta}(t,\! x,\! u)  \nonumber \\ 
    &\!\geq\! -\alpha_r \big ( \Psi(x)  +  \grad L_{\hat{f}}^{r-2} h(x) {\Delta}(t,\! x,\! u) \big ), \ \forall x \in \C, 
\end{align}
where ${\alpha_r \!\in\! \mathcal{K}_{\infty, e}}$, ${ \mathcal{F}(x) \!\triangleq\! \derp{}{x} \big ( \grad L_{\hat{f}}^{r-2} h(x)  \big ) }$, with ${\mathcal{F} \!:\! X \!\to\! \mathbb{R}^{n \times n }}$. And, from Assumption~\ref{as1} we have
\begin{equation*}
     \grad L_{\hat{f}}^{r-2} h(x) \!~ \Dot{\Delta}(t,\! x,\! u)   \!\geq\! - \big \| \grad L_{\hat{f}}^{r-2} h(x) \big \| \delta_L,  \forall (\!t,\! x,\!u\!) \!\in\! \mathbb{R}^{+}_0 \!\times\! X \!\times\! U.
\end{equation*}

Like Theorem~\ref{th:met3}, we replace $\Delta$ with ${\hat{\Delta} \!+\! e}$, and $u$ with ${\bar{u} \!-\! \hat{g}^\dagger \hat{\Delta} }$ in \eqref{eq:hocbfproofmod} to integrate the uncertainty estimator into the robust safety constraint. This process results in the following error-dependent HOCBF condition similar to \eqref{pro:main_pro3}:
\begin{align}
\begin{split}
\label{eq:hocbfeineqfex}
& L_{\hat{f}}\Psi (x) +  L_{\hat{g}} \Psi (x) \big ( \bar{u} - \hat{g}^\dagger\!(x) \hat{\Delta}(t) \big ) +  \grad  \Psi (x)  \hat{\Delta}(t) \\
&+ \big ( \hat{f}(x) + \hat{g}(x) \big ( \bar{u} - \hat{g}^\dagger\!(x) \hat{\Delta}(t) \big ) + \hat{\Delta}(t)  \big )^\top \mathcal{F}(x)  \hat{\Delta}(t)  \\  
& \!-\!  \bar{e}(t) \big \| \big ( \hat{f}(x) + \hat{g}(x) \big ( \bar{u} - \hat{g}^\dagger\!(x) \hat{\Delta}(t) \big ) + \hat{\Delta}(t)  \big )^\top  \mathcal{F}(x)  \big \| \\
    &-  \big \| \grad L_{\hat{f}}^{r-2} h(x) \big \| \delta_L   - \Omega(t, x, e)  \\
   & \geq\! -\alpha_r  \big ( \Psi(x) \!+\!    \grad L_{\hat{f}}^{r-2} h(x) \hat{\Delta}(t) \!-\!\big \| \grad L_{\hat{f}}^{r-2} h(x) \big \| \bar{e}(t) \big ), 
\end{split}
\end{align}
where ${\Omega(t,x)  \!\!\triangleq\! \| \grad  \Psi (x)    \| \bar{e}(t) \!+\!\! \big \| \mathcal{F}(x) \big \| \big \|   \hat{\Delta}(t) \big \| \bar{e}(t)  \!\!+\!\! \big \|  \mathcal{F}(x) \big \| \bar{e}^2(t)}$. Note that we use the induced matrix 2-norm, which satisfies the submultiplicative property, in \eqref{eq:hocbfeineqfex} to bound the matrices.

\begin{theorem} \label{theo:rhocbfs}
Let ${h}$ be an HOCBF for the uncertain model \eqref{sysun} on set ${\C(t)}$, constructed in \eqref{eq:hocbfset2}. Let uncertainty ${\Delta}$, of the form \eqref{eq:munmdel},  satisfy Assumptions~\ref{as1} and ~\ref{as:DRD-IRD}. Given an uncertainty estimator with an associated Lyapunov function $V_e$ that satisfies \eqref{eq:er_iss}, and a feedback control law ${{u} \!=\!  \bar{u} \!-\! \hat{g}^\dagger\!(x) \hat{\Delta}(t)  }$, any Lipschitz continuous controller ${\bar{u} \!=\! \mathbf{\bar{k}}(x)}$ satisfying \eqref{eq:hocbfeineqfex} renders set ${ \C(t) }$ forward invariant.
\end{theorem}
\begin{proof}
\label{pro:thecone}
We follow the same steps described in the proof of Theorem~\ref{th:met3}. We replace the unknown terms with their lower bounds to obtain a sufficient HOCBF condition. As inequality in \eqref{eq:hocbfeineqfex} ${\!\!\implies\!\!}$ \eqref{eq:hocbfproofmod}, we have that system \eqref{sysun} is safe with respect to set $\C(t)$ by Theorem~\ref{teo:hocbfdef}.
\end{proof}
\vskip -0.1 true in
Then, given a nominal controller ${\mathbf{k_d}}$, to design a controller ${{\bar{u}^*} \!\!=\! \mathbf{\bar{k}^*}(t,\!x),~\mathbf{\bar{k}^*} \!:\! \mathbb{R}^{+}_0 \!\! \times \!\! X \!\!\to\! \mathbb{R}^m}$ satisfying \eqref{eq:hocbfeineqfex}, we define an uncertainty estimation-based robust {UE-HOCBF safety filter}:
\begin{align*}
\begin{array}{lll}
{\mathbf{\bar{k}^*}(t,x)= \ }
\displaystyle  \argmin_{\bar{u} \in \mathbb{R}^m} \ \ \ \ {\|\bar{u}-\mathbf{k_d}(x)\|^2}  \\ [4mm]
~~~\textrm{s.t.} \ \  \tilde{c}(t, x, u) \leq \tilde{a}(t, x) \bar{u} +  \tilde{b}(t, x) \\ [1mm]
~~~~~~~~~~~ \bar{u} - \hat{g}^\dagger\!(x) \hat{\Delta}(t) \in U,
\end{array}
\end{align*}
where
\begin{align}
\begin{split}
\label{eq:qcqpparam}
&\!\!\tilde{a}(t, \!x) \triangleq L_{\hat{g}} \Psi (x) \!+\!  \big (  \hat{g}^\top\!(x)\mathcal{F}(x) \hat{\Delta}(t)   \big )^\top  , \\
&\!\!\tilde{b}(t, \!x) \triangleq 
 L_{\hat{f}}\Psi (x) \!+\!  L_{\hat{g}} \Psi (x) \big ( \!- \hat{g}^\dagger\!(x) \hat{\Delta}(t) \big ) \!+\!  \grad  \Psi (x)  \hat{\Delta}(t) \\
&\!\!+ \big ( \hat{f}(x) + \hat{g}(x) \big ( - \hat{g}^\dagger\!(x) \hat{\Delta}(t) \big ) + \hat{\Delta}(t)  \big )^\top \mathcal{F}(x)  \hat{\Delta}(t)  \\  
&\!\!-  \big \| \grad L_{\hat{f}}^{r-2} h(x) \big \| \delta_L   - \Omega(t, x, e)  \\
& \!\! + \alpha_r \! \big ( \Psi(x) \!+\!    \grad L_{\hat{f}}^{r-2} h(x) \hat{\Delta}(t) \!-\!\big \| \grad L_{\hat{f}}^{r-2} h(x) \big \| \bar{e}(t) \big ) \\
&\!\! \tilde{c}(t, \!x, \!u) \!\triangleq\! \bar{e}(t) \big \| \big ( \hat{f}(x) \!+\! \hat{g}(x) \big ( \bar{u} \!-\! \hat{g}^\dagger\!(x) \hat{\Delta}(t) \big ) \!+\! \hat{\Delta}(t)  \big )^\top \! \mathcal{F}(x)   \big \|. 
\end{split}
\end{align}
The sufficient HOCBF condition \eqref{eq:hocbfeineqfex} is not affine in the control $\bar{u}$. Thus, the UE-HOCBF optimization problem is not a QP. However, we show next how to integrate the HOCBF condition into a second-order cone program (SOCP). To make the objective function linear, we convert the quadratic cost to \textit{epigraph form} by adding a new variable: ${\rho \!\in\! \mathbb{R}}$:
\begin{align*}
\begin{array}{llll}
{[\mathbf{\bar{k}^*}(t,x),~ \rho^*]= \ }
\displaystyle  \argmin_{\bar{u} \in \mathbb{R}^m,~\rho \in \mathbb{R}} \ \ \ \ {\rho}  \\ [4mm]
~~~~\textrm{s.t.} \ \  \tilde{c}(t, x, u) \leq \tilde{a}(t, x) \bar{u} +  \tilde{b}(t, x) \\ [1mm]
~~~~~~~~~\|\bar{u} -\mathbf{k_d}(x) \|^2 \leq \rho \\ [1mm]
~~~~~~~~~\bar{u} - \hat{g}^\dagger\!(x) \hat{\Delta}(t) \in U,
\end{array}
\end{align*}
where the rotated second-order cone constraint ${\|\bar{u} -\mathbf{k_d}(x) \|^2 \leq \rho}$ can be reformulated in the form of a standard second-order cone:
\begin{equation}
    \label{eq:socprot}
    \left \| \begin{bmatrix}
        2 (\bar{u} -\mathbf{k_d}(x)) \\
        \rho - 1
    \end{bmatrix}\right \| \leq \rho + 1 .
\end{equation}
The UE-HOCBF-SOCP is the solution to the optimization problem above with the second constraint replaced by \eqref{eq:socprot}. The UE-HOCBF-SOCP guarantees forward invariance of ${\C(t)}$.
\begin{remark}
    \label{re:Hocbf-qp}
If ${ ( \hat{g}(x) \bar{u} )^\top \! \mathcal{F}(x) \!\equiv\! 0}$, the left-hand side of the inequality in the {UE-HOCBF safety filter} becomes independent of $\bar{u}$, and the optimization problem reduces to a QP. More specifically, suppose that ${\mathcal{F}(x) \!\equiv\! 0}$, then the constraint can be integrated into an uncertainty estimation-based QP, UE-HOCBF-QP. Furthermore, if ${ ( \hat{g}(x) \bar{u} )^\top \! \mathcal{F}(x) \!\equiv\! 0}$, we can extend Theorem~\ref{theo:main} to HOCBFs. In order to achieve this objective, we need to modify ${\phi_{r-1} }$ and its 0-superlevel set to provide robustness against estimation error $e$, analogous to \eqref{hecbf}:
\begin{align*}
   \phi_{r-1}^V (t, x_e, u)  \triangleq \phi_{r-1} (t,\! x,\! u) \!-\! \sigma_V V_e(e) , ~ \sigma_V \!\in\! \mathbb{R}^+ , \\
    \C_{r-1}^V(t) \!\triangleq\! \big\{ (x_e,\! u) \!\in\! X \!\times\! \mathbb{R}^{n} \!\times\! U  ~|~ \phi_{r-1}^V (t,\! x_e,\! u) \!\geq\! 0 \big\} , 
\end{align*}
which is a subset of ${\C_{r-1}(t)}$. Then, one can easily follow the same steps in Theorem~\ref{theo:main} and its proof.
\end{remark}

\begin{remark}
\label{re:hmdorhocbf}
To compensate for the mismatch between the nominal model and the uncertain system dynamics, we subtract the estimated uncertainty from the UE-CBF-QP and UE-HOCBF-SOCP outputs, as given in equation \eqref{eq:u_hat}. This modification improves closed-loop control performance, but it is not related to safety. However, since it is a part of feedback control, it appears in the CBF constraints, as seen from the QPs. Therefore, one can remove ${-\hat{g}^\dagger\!(x) \hat{\Delta}(t)}$ from the composite control law \eqref{eq:u_hat}. In this scenario, we only need to replace  ${\mathbf{\bar{k}}(x) \!-\! \hat{g}^\dagger\!(x) \hat{\Delta}(t)}$ with ${\mathbf{\bar{k}}(x)}$ between equations \eqref{eq:u_hat} and \eqref{eq:qcqpparam}. 
\end{remark}

\section{Simulations And Hardware Experiments}
\label{ex:sim_hard}

\subsection{Elastic Actuator with Matched and Unmatched Uncertainties}
\label{ex:EA}
We first demonstrate our robust HOCBF method on a simulated elastic actuator problem characterized by both full-state matched and unmatched uncertainty.  

Consider a flexible-joint mechanism, adopted from \cite{slotine1991applied} (Example 6.10), whose dynamics take the form \eqref{eq:sysune1} with the components:
\begin{align*}
\begin{split}
&\hat{f}(x)  \!=\!\! \begin{bmatrix}
x_2 \!\!\!&
{(m \bar{g} L \sin{{x}_1} \!-\! k ({x}_1 \!-\! {x}_3 ))}/{I_L}  \!\!\!&
x_4 \!\!\!&
{k} ({x}_1 \!-\! {x}_3 ) / {J_m}
\end{bmatrix}^\top,
\\
&\hat{g}(x) \!=\!\!
\begin{bmatrix}
0 &
0 &
0 &
{1}/{J_m} 
\end{bmatrix} ,
\\
&\Delta(x, u) \!=\!\!
\begin{bmatrix} 
0 \!\!\!& 
-\frac{1}{3} \hat{f}_2(x) - 0.7 \!\!\!&
0 \!\!\!&
\frac{1}{4} \hat{f}_4(x) \!-\! 0.2 + \frac{u}{4 J_m} 
\end{bmatrix}, 
\end{split}
\end{align*}
where ${\Delta(x, u) \!\triangleq\! [0~\Delta_2(x)~0~\Delta_4(x, u)]^\top}$, ${\hat{f}_2}$ and ${\hat{f}_4}$ are the second and fourth components of $\hat{f}$, ${{x}_1, {x}_3}$ are the joint angles of the actuator, ${J_m \!=\! 0.1 ~ \text{[kg}\text{m}^2]}$ is the motor rotational inertia, ${k \!=\! 0.25 ~ \text{[Nm/{rad}]}}$ is the torsional spring stiffness, ${m \!=\! 0.5~ \text{[kg]}}$ is the load's mass, ${I_L \!=\! 0.5~\text{[kg}\text{m}^2]}$ is the load rotational inertia, ${L \!=\! 0.04 ~ \text{[{m}]}}$ is the eccentricity of the load's center of mass, $\bar{g}$ is the gravitational constant, and $u$ is the motor torque. The uncertainty takes the form \eqref{eq:sysune1}, and ${\Delta_2(x) \!\perp\! \Ima(\hat{g}(x)) }$. The control input set is ${ U \!\triangleq\! \left\{ u \!\in\!  \mathbb{R} \!:\! -u_{\rm max} \!\leq\! u \!\leq\!  u_{\rm max} \right\}}$, where ${u_{\rm max} \!=\! 0.2 \bar{g}}$. We obtain ${\hat{g}^\dagger(x) \!=\! [0~0~0~J_m]^\top}$, which yields the full-state matched and unmatched uncertainty using \eqref{syscon} as ${\Delta_m(x, u)  \!=\! \Delta_4}$, ${\Delta_u(x)  \!=\!  \Delta_2}$. 

To synthesize a nominal feedback controller, we choose a \textit{control Lyapunov function} (CLF): ${V(x) \!=\! (x_4 \!-\! x_{4d})^2}$, ${x_{4d} \!=\! 0}$, which embodies the tracking requirement for state ${x_4}$, where ${x_{4d}}$ is the desired trajectory. This CLF choice yields the min-norm controller: 
\begin{equation*}
\mathbf{k_d}(x) \!=\! \begin{cases} \!\!-\dfrac{\overbrace{L_f V (x) \!+\! \lambda_V V (x)}^{\triangleq \varsigma(x)}}{\|L_g V (x)\|^2}L_g V^\top (x) & \!\!\!\!\text { if } \varsigma(x)\!>\!0 \\ 0 & \!\!\!\!\text { if } \varsigma(x) \!\leq \!0\end{cases}
\end{equation*}
that is the solution of the CLF-QP \cite{jankovic2018robust} with ${\lambda_V \!=\! 10}$. 

We choose a CBF, ${h(x) \!=\! x_2 \!-\! c_q x_3}$, where ${c_q \!=\! -2}$, and ${\alpha_1 (h) \!=\! 2 h, \alpha_2 (h) \!=\! 2 h}$. The CBF ${h}$ has IRD of 2, i.e., ${r \!=\! 2}$ in \eqref{eq:hocbfder1}, and DRD of 1. This uncertainty structure and CBF choice satisfy Assumption~\ref{as:DRD-IRD}. We use an HOCBF framework. From \eqref{eq:secphi} and \eqref{eq:hocbfproofmod} we have ${\phi_0(x)  \!=\! x_2 \!-\! c_q x_3}$, ${\phi_{1}(x)  \!\!=\!\!  \Delta_2(x) \!+\!{ {m \bar{g} L \sin{{x}_1}  \!-\! k ({x}_1 \!-\! {x}_3 )}   }/{I_L} \!-\! c_q x_4 \!+\! \alpha_1( x_2 \!-\! c_q x_3 \!)}$, and ${\mathcal{F}(x) \!\equiv\! 0}$. Since ${\mathcal{F}(x) \!\equiv\! 0}$, we use an UE-HOCBF-QP as a safety filter. Substituting ${\phi_0,~\phi_1}$ into \eqref{eq:qcqpparam} yields an affine HOCBF condition for the actuator system for some ${\alpha_2 \!\in\! \mathcal{K}_{\infty, e}  }$. The initial state values are  ${x(0) \!=\! [0~\!0.5~\!0~\!\!-\!0.2]^\top}$. To obtain bounding constants ${\delta_L, \delta_b}$ for the problem's uncertainty, we simulate numerous scenarios, resulting in ${\delta_L \!=\! 0.6, \delta_b \!=\! 1.4}$. We choose ${\Lambda \!=\! \diag(5, 5, 5, 5)}$ resulting in ${ \mu_e \!=\! 1.25}$, and  ${H \!=\! \diag(1, 1, 1, 1)}$ to solve the Lyapunov equation.
\begin{figure}
	\centering
	\includegraphics[width=1.0\linewidth]{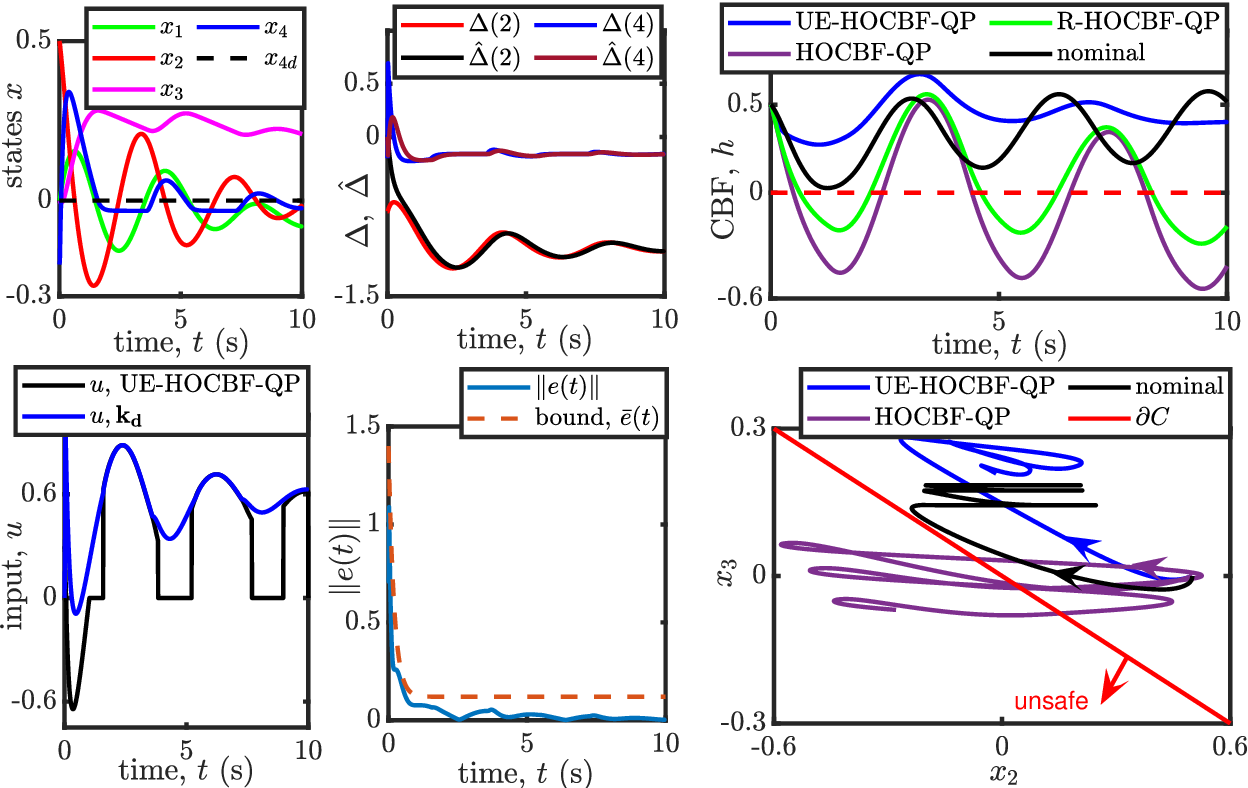}
	\caption{Simulations of the uncertain elastic actuator. \textbf{(Left)} (Top) State and reference tracking of the uncertain system operating under the UE-HOCBF-QP. (Bottom) Values of the safe control input, $u$ safe, and the nominal input, $u$ nominal, vs. time.
    \textbf{(Center)} (Top) The proposed estimator can effectively estimate actual model uncertainties within the quantified bounds. Because of ${\hat{\Delta}(0) \!=\! \bf 0}$ and ${\Delta_1 \!\equiv\! 0,~\Delta_3 \!\equiv\! 0}$, the estimator output is zero for these uncertainty components. (Bottom) The estimation error with the theoretical bound which is satisfied.
    \textbf{(Right)} (Top) The value of CBF $h$ vs. time and trajectory of the system with the boundary of the safe set. (Bottom) The proposed UE-HOCBF-QP safety filter maintains safety in the presence of unmodeled dynamics. While HOCBF-QP satisfies safety for the nominal system, it leaves the safe set with the uncertain system. Under the robustified HOCBF (R-HOCBF) approach \cite{tan2021high}, the system leaves the safe set due to ignoring the unmatched uncertainty, which leads to safety violations.} 
	\vspace{-4mm}
	\label{fig:result_EA}
\end{figure}

We compared the proposed method to the robustified HOCBF \cite{tan2021high} (Remark 3) with a known time-invariant upper bound of the matched uncertainty ${\Delta_m}$ such that ${ \|\Delta_m(x,u)\| \!<\! \Bar{\Delta}_m  }$, ${\forall (x, u) \in X \times U}$. However, this method ignores the unmatched uncertainty ${\Delta}_u$ and thus does not guarantee safety in the presence of unmatched disturbances, as it relies on the assumption that ${\text{IRD} \!=\! \text{DRD} \!=\! r}$. As discussed in Section~\ref{sec:intro}, our method is, to the best of our knowledge, the first to ensure robust safety against both matched and unmatched uncertainties via HOCBFs.   

Data from the simulated system (100 Hz sampling rate) are presented in Fig.~\ref{fig:result_EA}. With the HOCBF-QP controller, the nominal system (the elastic actuatorwithout uncertainty) remains safe. However, safety violations occur in the presence of unmodelled system dynamics. Additionally, the robustified HOCBF method \cite{tan2021high} does not guarantee safety, as it only considers matched uncertainty. Note that the HOCBF-QP results in an unsafe system, whereas the system remains safe with our proposed method. Fig.~\ref{fig:result_EA} also demonstrates that the proposed uncertainty estimator effectively estimates actual modeling uncertainties within the quantified bounds.

\subsection{Experimental Validation on a Mobile Robot}
\label{sec:exper}
Real-world safety-critical control systems suffer from incompletely modeled uncertainties that may degrade the safety guarantees of controllers designed from nominal models. To address this robustness problem, we must consider the actual system models when synthesizing safe controllers, as exemplified in the following robotics example.

Tracked mobile robots are difficult to model exactly. Simplified models,  such as the unicycle model, are commonly used in the design of tracked vehicle controllers.
\begin{figure}
\centering
\vspace{-4mm}
\subfloat{%
   \includegraphics[width=0.405\columnwidth]{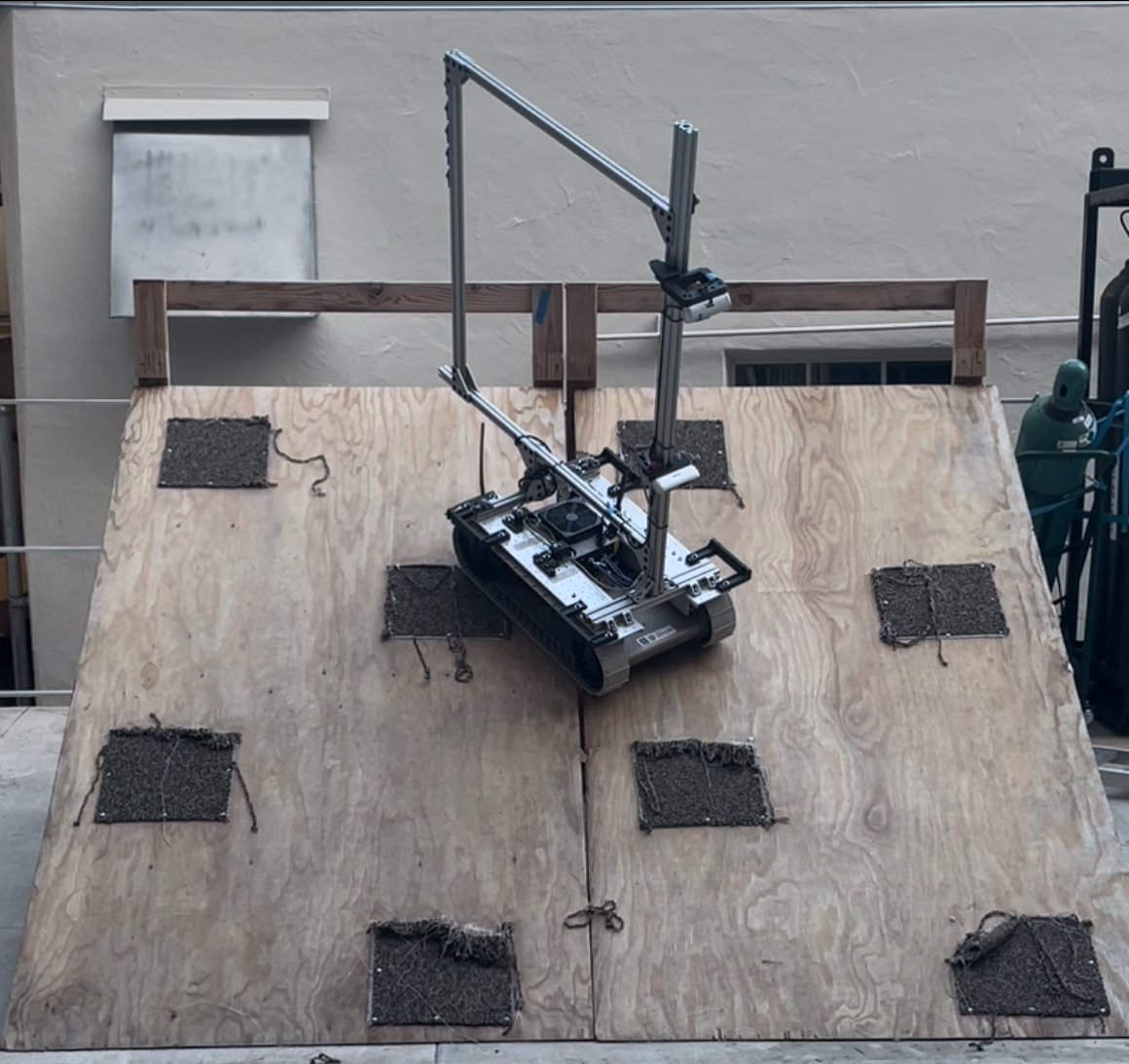}%
   \label{fig:robot-sub1}}
\hfill
\subfloat{%
   \includegraphics[width=0.58\columnwidth]{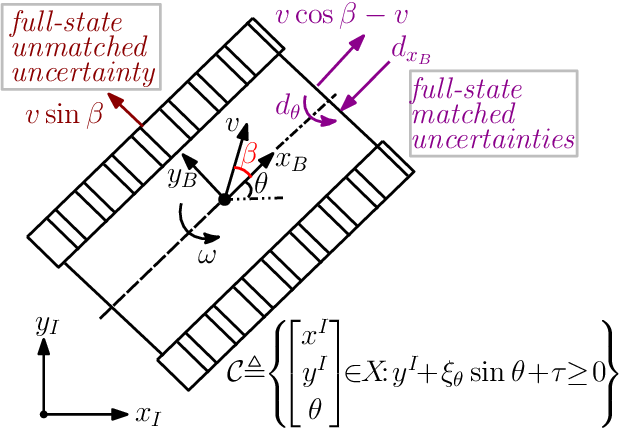}%
   \label{fig:robot-sub2}}
\caption{Photo of a tracked mobile robot on an inclined surface during one experiment. The higher friction carpet patches on the slope result in slip uncertainty when only one track lies on a patch \textbf{(Left)}. The geometry of full-state matched and unmatched uncertainties due to skidding and slipping, and description of the safe set ${\C}$ for Example~\ref{sec:exper} \textbf{(Right)}.}
\vskip -3mm
\label{fig:robot6d}
\vskip -3mm
\end{figure}
A more detailed model lumps all of the unmodeled track-terrain interaction effects intoa time-varying {\em slip angle} $\beta$, a {\em yaw rate uncertainty} $d_{\theta}$, and a {\em longitudinal slip velocity} ${d_{x_B}}$ that models slip between the tracksand the terrain (see Fig.~\ref{fig:robot6d}):
\begin{equation}
\label{eq:unicyle}
        \underbrace{
   \begin{bmatrix}
    \dot{x}^I \\
    \dot{y}^I \\
     \dot{\theta } 
    \end{bmatrix}}_{\dot{x}}
     \!=\! 
     \underbrace{
     \begin{bmatrix}
    \cos{\theta} & 0    \\
     \sin{\theta} & 0  \\
    0 & 1 
    \end{bmatrix}}_{\hat{g}(x)}
    \underbrace{
    \begin{bmatrix}
        v \\ \omega
   \end{bmatrix}}_{u} 
    \!+\! \underbrace{\begin{bmatrix}
    \!v (\cos\gamma \!-\! \cos\theta) \!-\!d_{x_B} \cos{\theta}  \\
     \!v (\sin\gamma \!-\! \sin\theta) \!-\!d_{x_B} \sin{\theta} \\
      \!d_{\theta}
     \end{bmatrix}}_{\Delta(t,\! x,\! u)} ,
\end{equation}
where ${ [ {x}^I ~ {y}^I ]^\top}$ is the vehicle's planar position with respect to inertial frame ${I}$, $\theta$ is itsyaw angle, ${  v }$ is its linear velocity, ${\omega }$ is the angular velocity, and ${\gamma\!\triangleq\! \theta \!+\! \beta}$. Note that \eqref{eq:unicyle} is a nonlinear control affine system with time-varying, state, and input-dependent uncertainty, like  \eqref{sysun}.  This more advanced model, which still simplifies many physical effects, and whose parametersvary with vehicle speed and terrain characteristics, allows us to quantitatively analyze the effects of disturbances. In Fig.~\ref{fig:robot6d}, note that slipcomponent ${v \cos{\beta} \!-\! v}$ anddisturbance ${d_{x_B}}$ are parallel to the ${x_B}$-axis. These uncertainties, which affect the robot's linear velocity,can be compensated by control $v$. Thus, they are classified as full-state matched uncertainty/disturbance. Conversely, the second slip component, ${v \sin{\beta}}$, is perpendicular to ${x_B}$. It cannot be compensated by inputs ${v}$ or ${w}$; therefore, ${v \sin{\beta}}$ is a full-state unmatched uncertainty. The full-state matched and unmatched uncertainty are obtained using \eqref{eq:matched_fin} and \eqref{eq:unmatched_fin}:
 \begin{align}
 \begin{split}
 \label{eq:matched_unif}
    \Delta_m(t,\! x,\! u) &=
    \begin{bmatrix}
    v \cos{\theta} ( \cos{\beta } - 1  ) -d_{x_B} \cos{\theta} \\
     v \sin{\theta} ( \cos{\beta } - 1  ) -d_{x_B} \sin{\theta} \\
     d_{\theta} 
     \end{bmatrix}, \\
     \Delta_u(t,\! x,\! u) &= \begin{bmatrix}
   - v \sin{\theta}  \sin{\beta }  & 
     v \cos{\theta}  \sin{\beta }  &
     0 
     \end{bmatrix}^\top,
 \end{split}
 \end{align}
where $\Delta_u$ is the observed value of vector ${v \sin{\beta}}$ in the inertial frame, and representslateral vehicle motion due to improperly modeled slip.   

Next, we consider how the uncertainty $\Delta$ in \eqref{eq:unicyle} impacts the CBF constraint. Assume that the mobile robot must maintain a safe distance ${\tau\! \in\! \mathbb{R}^+_0}$ froma wall aligned in the ${y}^I$ direction: $h(x) \!=\! {y}^I  \!+\! \xi_{\theta}  \sin{\theta} \!+\! \tau$, where ${\xi_{\theta}\! \in\! \mathbb{R}^+_0}$determines the effect of the heading angle on safety (see Fig.~\ref{fig:robot6d}). This CBF choice results in $L_{\hat{f}} h(x) \!=\! 0$, $L_{\hat{g}} h(x) \!=\! [ \sin{\theta} ~ \xi_{\theta}\cos{\theta}]^\top$, $\grad h \Delta(t\!,x,\! u) \!=\! v ( \sin{\gamma}  \!-\! \sin{\theta} ) \!-\!d_{x_B}\! \sin{\theta} \!+\! d_{\theta} \xi_{\theta} \cos{\theta}$, and the CBF ${h}$ has ${\text{DRD} \!=\! \text{IRD} \!=\! 1}$.

Note that $h$ is a CBF for the nominal system as ${ L_{\hat{g}} h(x)  \!\neq\! 0}$, ${\forall x\! \in\! X}$; thus, we have ${v \sin{\theta} \!+\! \omega \xi_{\theta}\cos{\theta} \!\geq\! - \alpha ({y}^I \!+\! \xi_{\theta}  \sin{\theta} \!+\! \tau)}$, ${\forall x \!\in\! \C}$, which is affine in the control input vector $u$. To analyze the effect of uncertainty on safety, consider the CBF \eqref{eq:hdot} under the uncertain unicycle model \eqref{eq:unicyle}:
\begin{align}
\begin{split}
\label{eq:cbfununi}
&v \sin{\theta} + \omega \xi_{\theta}\cos{\theta} + \alpha ({y}^I +  \xi_{\theta}  \sin{\theta} + \tau) \!\geq\! 0 \\
&~~~~~\geq -v ( \sin{\gamma } - \sin{\theta} )  + d_{x_B} \sin{\theta} - d_{\theta} \xi_{\theta} \cos{\theta}.
\end{split}
\end{align}
The left-hand side of \eqref{eq:cbfununi} is satisfied since $h$ is a CBF for the nominal system. The right-hand side of \eqref{eq:cbfununi} quantifies the effect of uncertainty on safety: when it is positive (e.g., if ${v \!>\! 0}$,  ${-\pi/4 \!<\! \theta \!<\! 0}$,  ${-\pi/4 \!<\! \beta \!<\! 0}$, ${d_{x_B} \!\leq\! 0}$, and ${d_{\theta} \!\leq\! 0}$, then the right-hand term will be positive), safety may be violated. To address this robustness problem, we implemented our {UE-CBF-QP}.

We conducted an experimental test on a ground mobile robot (Fig.~\ref{fig:robot6d}). The test vehicle and its onboard computation are described inSection IV of \cite{janwani2023learning}. The 25$^{\circ}$ inclined test surface caused both slip-induced matched and unmatched uncertainty. We used the CBF $h$ with ${\tau \!=\! 1.0~[m]}$, ${\xi_{\theta} \!=\! 0.1}$ and the nominal controller ${\mathbf{k_d}(x) \!=\! [
        K_v  d_g  ~
        K_{\omega}  {y_g \!-\! y^\mathcal{I}}\!/\!{d_g} \!-\! K_{\omega} \sin{\theta} 
]^\top}$, with ${K_v \!=\! 1,\!~K_{\omega} \!=\! 1.2}$, are the controller gains, ${x_g \!=\! 1~[m],\!~y_g \!=\! 1~[m] }$ is the robot's goal position, and ${d_g \!\triangleq\! \|  x_g \!-\! x^\mathcal{I}, \!~ y_g \!-\!y^\mathcal{I} \| }$. The robot's control input setis given by: ${U \triangleq \big\{ u \!\in\!  \mathbb{R}^2 \!:\! -u_{\rm max} \!\leq\! u \!\leq\! u_{\rm max} \big\}}$, where ${u_{\rm max} \!\triangleq\! [0.5 ~ 0.5]^\top }$. Thesafety-critical control parametersare: ${\alpha_h \!=\! 1}$, ${\sigma_{V} \!=\! 1 }$, $\Lambda \!=\! \diag(4, 4, 4)$, ${\delta_L \!=\! 0.5}$, ${\delta_b \!=\! 1.5}$, and ${H \!=\! \diag(8, 8, 8)}$. The control loops operatedat a 50 Hzrate.

The robot's reference trajectory is chosen to violate the safety constraint when using the nominal controller. Under a CBF-QP approach, the robot's path leaves the safe set due to the unmodeled uncertainty--see Fig.~\ref{fig:GVRrobot} \textbf{(Right)} (Middle) and (Bottom). Note that the uncertainty vector is the difference between the discrete-time derivative of the hardware system's flow mapand the nominal robot model. Uthe robot states remainwithin the 0-superlevel set of the chosen CBF. Fig.~\ref{fig:GVRrobot} \textbf{(Center)} shows that our approach can estimate the model uncertainty, which affects the safety constraints. 
\begin{figure}
	\centering
	\includegraphics[width = \linewidth]{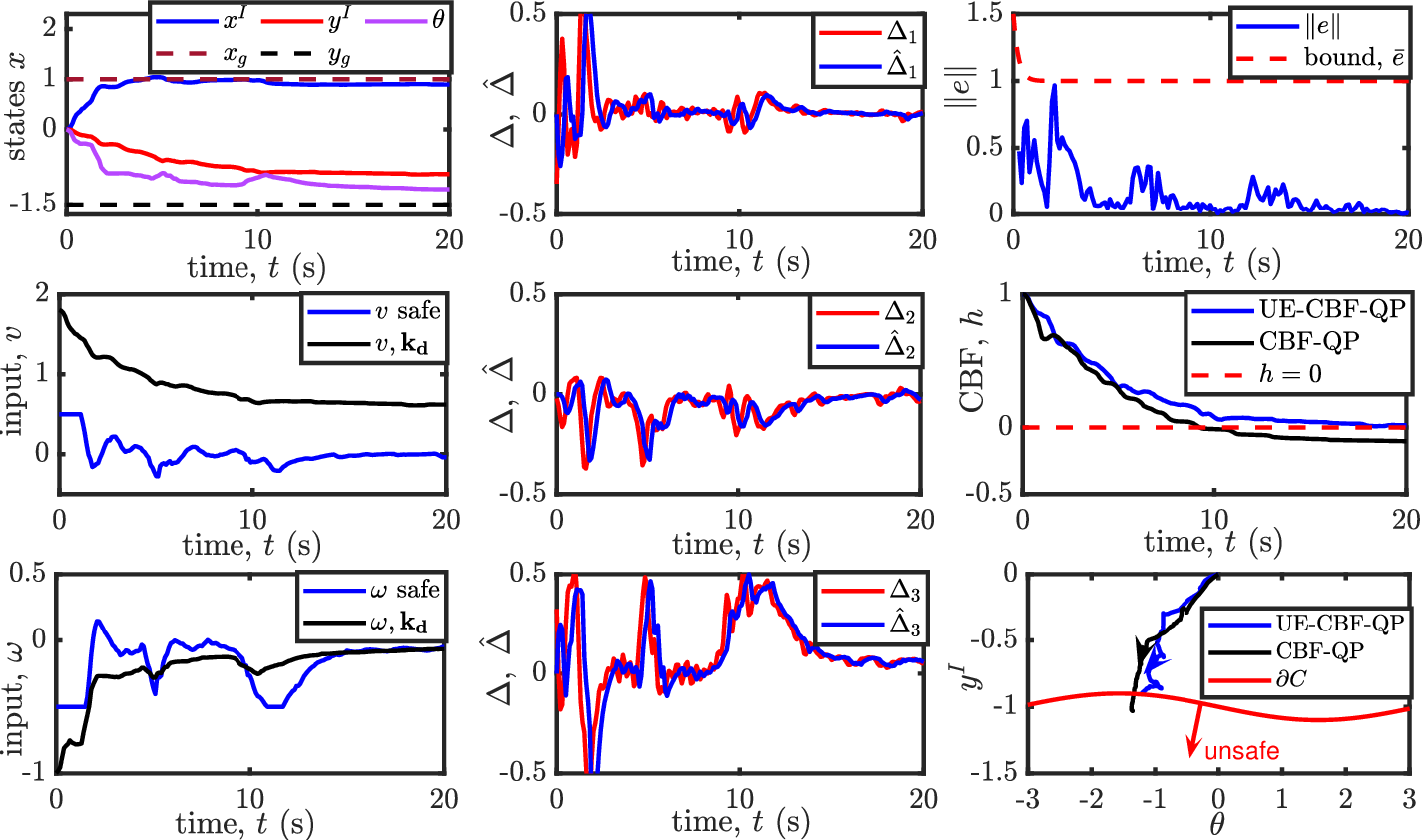}
\vskip -0.1 true in
\caption{Experimental results for the mobile robot example of Fig.~\ref{fig:robot6d}. 
\textbf{(Left)} (Top) State trajectories (${x^I~[m],~y^I~[m],~\theta~[rad]}$) and goal position of a robot operating under the UE-CBF-QP. (Middle) and (Bottom) The values of the safe linear velocity input, $v~[m/s]$ safe, the angular velocity input, $\omega~[rad/s]$ safe, and the nominal control inputs, $v$ nominal, $\omega$ nominal, vs. time. 
\textbf{(Center)} The uncertainty estimator tracksmodeing inaccuracies, as shown by the first and last components of the uncertainty vector,${\Delta_1}$ and ${\Delta_3}$. 
\textbf{(Right)} (Top) The uncertainty estimation error stays within the theoretical bounds. (Middle) The value of CBF $h$ vs. time. (Bottom) System trajectory is plotted with the safe set boundary. The proposed UE-CBF-QP-based safety filter maintains a safe distance from an edge on a slope in the presence of unmodeed dynamics. However, the CBF-QP controlled robot leaves the safe set due to the slope-induced uncertainty.}
\vspace{-4 mm}
\label{fig:GVRrobot}
\end{figure} 

\section{Conclusions and Future Work}
\label{sec:conc}
We presented a method to synthesize robust controllers with formal safety guarantees for nonlinear systems that experience both matched and unmatched uncertainty. A novel uncertainty estimator is applied to state and input-dependent uncertainty, leading to bound in the CBF framework. By integrating the estimator with CBFs and HOCBFs, robust safety conditions were derived. The utility of ourframeworks was experimentally demonstrated by controlling a tracked mobile robot on a slope while using a reduced-order nominal model.

Future work will estimate the bound $\delta_L$ via data-driven techniques. Additionally, our methods can also consider state estimation errors.

\begin{spacing}{0.865}
\bibliographystyle{IEEEtran}
\section*{References}
\vspace{-4 mm}
\bibliography{Refs}
\vspace{-0 mm}
\end{spacing}

\end{document}